\numberwithin{equation}{section}
\theoremstyle{plain}
\newtheorem{lemma}{Lemma}[section]
\newtheorem{thm}{Theorem}[section]
\newtheorem{defn}{Definition}[section]
\newtheorem{prop}{Proposition}[section]
\newtheorem{cor}{Corollary}[section]
\newtheorem{remark}{Remark}[section]
\newtheorem{Example}{Example}[section]
\newcommand{\p}{\mathbf{p}}
\newcommand{\q}{\mathbf{q}}
\newcommand{\E}{\mathbb{E}}
\newcommand{\V}{{\text{Var}}}
\newcommand{\PP}{{\mathbb{P}}}
\newcommand{\cL}{{\mathcal{L}}}
\newcommand{\0}{{\mathbf{0}}}
\newcommand{\1}{{\mathbf{1}}}
\def\@setcopyright{}
\def\serieslogo@{}
\begin{document}

\title[]{Time Consistent Stopping For The Mean-Standard Deviation Problem --- The Discrete Time Case}
\thanks{E. Bayraktar is supported in part by the National Science Foundation under grant DMS-1613170 and by the Susan M. Smith Professorship. We would like to thank the two anonymous referees for their incisive remarks which helped improve our presentation.}

\author{Erhan Bayraktar}
\address[Erhan Bayraktar]{Department of Mathematics, University of Michigan.}
\email{erhan@umich.edu}
\author{Jingjie Zhang }
\address[Jingjie Zhang]{Department of Mathematics, University of Michigan.}
\email{jingjiez@umich.edu}
\author{Zhou Zhou}
\address[Zhou Zhou]{School of Mathematics and Statistics, University of Sydney.}
\email{zhou.zhou@sydney.edu.au}

\begin{abstract}
Inspired by Strotz's consistent planning strategy, we formulate the infinite horizon mean-variance stopping problem as a subgame perfect Nash equilibrium in order to determine time consistent strategies with no regret.  Equilibria among stopping times or randomized stopping times may not exist. This motivates us to consider the notion of liquidation strategies, which allows the stopping right to be divisible. We then argue that 
 the mean-standard deviation variant of this problem makes more sense for this type of strategies in terms of time consistency. It turns out that an equilibrium liquidation strategy always exists. We then analyze whether optimal equilibrium liquidation strategies exist and whether they are unique and observe that neither may hold.
\end{abstract}

\keywords{Time-inconsistency, optimal stopping, liquidation strategy, mean-variance problem, subgame perfect Nash equilibrium.}

\maketitle
\pagestyle{headings}

\section{Introduction}
Consider an optimal stopping problem with an infinite time horizon
\begin{equation}
\sup_{\tau} \E_x[g(X_{\tau})],\label{cosp}
\end{equation}
where $X$ is a Markov process starting from state $x$, and the stopping time $\tau$ is chosen to maximize the expectation of the payoff function $g$. A classical approach to solving this optimal stopping problem is to use dynamic programming. Thanks to the particular form of (\ref{cosp}), this problem is known to be \emph{time-consistent} in the sense that its optimal stopping strategy does not depend on the initial state $x$. 
However, such property may fail to hold in some seemingly quite natural problems where the objective function is in different forms. In those cases, a stopping strategy that is optimal from ``today's'' point of view may not be optimal anymore from  ``tomorrow's'' point of view. Optimal stopping, more generally, optimal control problems with such property are said to be \emph{time-inconsistent}. 
Typical examples include non-exponential discounting, the optimization criteria used in cumulative prospect theory (e.g. rank based utility), and  the mean-variance criterion, which is the focus of our paper.

There are three ways one could deal with time-inconsistency, dating back to the seminal work of Strotz \cite{RePEc:oup:restud:v:23:y:1955:i:3:p:165-180.}. The first is to formulate an optimal stopping problem with a given initial state $x$. The optimal stopping time $\tau_x$ is then parametrized by the initial state $x$. Once the state starts at $x$, and optimal policy $\tau_x$ is determined, the player is precommitted to implementing this strategy. This strategy simply does not take the change of future preferences into account. The second is to have the agent repeatedly solve this problem, hence allow for changes in future preferences.  \cite{RePEc:oup:restud:v:23:y:1955:i:3:p:165-180.} called this strategy naive and further repercussions about this strategy are discussed in \cite{10.1257/aer.20130896}. 
\cite{MR3462068} discusses these two formulations under the labels \emph{static optimality} and \emph{dynamic optimality}.

The third way is to formulate the problem in game theoretic terms by viewing each state $x$ as a player in a game regarding when to stop the process $X$ and look for equilibrium strategies. Roughly speaking, an equilibrium strategy, which is also called a \emph{consistent plan}, can be viewed as a no-regret strategy since the agent has no incentive to deviate from the strategy at any current state $x$. 
The third way is the formulation we will follow here to analyze this problem in an infinite-horizon discrete-time setting. In discrete time,  infinite horizon is much more challenging than finite horizon because \cite{Pollak}'s backward sequential optimization approach to obtain the consistent plans no longer works.

Recently, there has been a lot of effort in determining the equilibrium strategies in stochastic control problems, see e.g. \cite{B2014} and the references therein. There are also several papers on the equilibrium strategies for stopping problems. Among them, \cite{10.1257/aer.89.1.103} analyzes the case in discrete time with a finite time horizon, and \cite{RePEc:eee:jfinec:v:84:y:2007:i:1:p:2-39} investigates a particular model in continuous time with an infinite-time horizon. A general treatment for stopping problems in continuous time is considered in \cite{MR3738666} in the context of hyperbolic discounting. In particular, \cite{MR3738666} proposes a definition of equilibrium in continuous time which avoids using the ``first order criteria'' as in control problems in the literature. \cite{MR3738666} also formulates equilibrium stopping policies as fixed points of an operator and constructs a large class of equilibria by iterating this operator. This effort is continued in \cite{2017arXiv170903535H} when the agents use probability distortions to calculate their criteria. As for the study of how to choose an equilibrium, \cite{doi:10.1137/17M1139187} considers a discrete-time infinite-horizon problem with non-exponential discounting, and investigates optimal equilibria in the sense of pointwise dominance. Apart from establishing the existence of a pure stopping equilibrium, \cite{doi:10.1137/17M1139187} also obtains the existence and uniqueness of an \emph{optimal equilibrium}. Also see \cite{2017arXiv171207806H} which is a continuous-time extension of \cite{doi:10.1137/17M1139187}. Let us also mention the recent work of \cite{doi:10.1137/17M1153029}, where the authors consider the equilibrium stopping strategies under the definition associated with first order criteria. They point out that the mean-variance problem is out of the scope of their approach. In another recent paper \cite{2018arXiv180407018C} by the same authors, a continuous time general framework for time-inconsistent stopping problems covering the mean-variance criterion is developed, and a continuous time mean-variance problem is studied, and a mixed stopping strategy for the time-inconsistent stopping problem in continuous time is defined as the first jumping time of a Cox-process associated to the state process.

In this paper, we study time-consistent mean-standard deviation stopping problems in discrete time with an infinite time horizon. We show that while a Markov equilibrium in the class of pure or randomized stopping times may not exist in general, there always exists an equilibrium liquidation strategy. In addition, we show that an optimal equilibrium in the sense of pointwise dominance may not exist, and may not be unique if it exists. We also establish the existence of a Pareto optimal equilibrium.

The main novelty in this paper is the determination of the right type of ``mixed" equilibrium strategies and the appropriate modification of the criteria to make sense of the consistent planning problem. In particular, we show that the obvious choice of mixing, i.e., a randomized stopping strategy, does not work because when an equilibrium in this class exists it coincides with the pure equilibrium strategy. (One should contrast this to the Example 2.6 of \cite{doi:10.1137/17M1153029}, where they show that there exists an equilibrium randomized stopping strategy which is not a pure stopping time.) The right notion of strategies turns out to be the \emph{liquidation strategies} that were introduced by \cite{bayraktar2016} (see also \cite{doi:10.1287/moor.2018.0932}) in the context of subhedging American options. The stopping right is taken to be divisible, or rather as a finite resource/fuel that the agent can consume continuously. The differences between randomized and liquidation strategies are highlighted in Remark~\ref{rem:diff-lq-rand}.

Instead of the mean-variance benchmark, we propose using the mean-standard deviation criterion. We think it is more meaningful in the context of consistent planning. One reason is that the mean and standard deviation are associated with the same unit. Moreover, the scaling property of this new objective function plays along well with liquidation strategies --- no matter what the history liquidation strategy is, we will face the same problem as soon as we are in the same state, because we can factor out the proportion of the stopping right remaining in the objective function. (For comparison, we also define equilibrium liquidation strategies for the mean-variance problem in a similar way. The consequent results reinforce our concerns about its properness in terms of consistent planning.)

Unlike examples in \cite{10.1257/aer.89.1.103}, the method of backward construction for equilibria fails in our setup because the time horizon is infinite. The fixed point approach used in \cite{doi:10.1137/17M1139187} (see e.g., (2.5) in \cite{doi:10.1137/17M1139187}) does not work for our problem either,  due to the non-linearity of the criterion. Instead, we provide a characterization of equilibria, from which we are able to calculate explicitly all the equilibria for many examples in this paper.

The rest of the paper is organized as follows. In Section~\ref{sec_MSD} we introduce the mean-standard deviation problem. We first analyze the equilibrium stopping time, and provide an example to show such equilibrium may not exist. Then we introduce the concepts of randomized stopping strategies and liquidation strategies and analyze the equilibria in these classes. In Section~\ref{sec_MV} we consider the similar concepts for mean-variance problems. In Section~\ref{sec_compare} we compare equilibrium liquidation strategies with statically optimal ones. Some computational details can be found in Appendix \ref{A}.

\section{Mean-Standard Deviation Problem}\label{sec_MSD}
Consider a probability space $(\Omega, \mathcal{F}, \mathbb{P})$ that supports a time-homogeneous discrete-time Markov chain $X=(X_n)_{n\in \mathbb{N}}$, taking values in a finite state space $\mathbb{X}\subset\mathbb{R}$. For each $x\in \mathbb{X}$, if $X_0=x$, we will write $X$ as $X^x$. The probability, expectation, and variance associated with $X^x$ will be denoted by $\mathbb{P}_x[\cdot]$, $\mathbb{E}_x[\cdot]$ and $\V_x[\cdot]$, respectively. We assume that the limit $X_{\infty}:=\lim_{n\to \infty} X_n$ exists almost surely. 
\subsection{Equilibrium Stopping Times}
For any $x\in \mathbb{X}$ and $\tau\in \mathcal{T}$ (where $\mathcal{T}$ is the set of stopping times w.r.t. the filtration generated by the Markov chain), consider the following objective function in mean-standard deviation problem
\begin{equation}
K_p(x, \tau):=\mathbb{E}_x[X_{\tau}]-c({\rm Var}_x[X_{\tau}])^{1/2}, \label{K_p}
\end{equation}
where $c>0$ is a constant and the subscript ``$p$'' in $K_p(x, \tau)$ stands for ``pure Markov stopping time''.

As we have discussed in the introduction part, this mean-standard deviation problem is time-inconsistent due to the non-linear term $({\rm Var}_x[X_{\tau}])^{1/2}$. We treat it as an intra-personal game regarding when to stop the process $X$ between current and future selves whose preferences, identified with the objective function $K_p$, change as the initial state $x$ changes. A reasonable equilibrium strategy should be such that once the agent chooses to follow the equilibrium strategy he will never regret no matter which state he comes into. Furthermore, we only consider the pure Markov stopping times commonly used in  game theory; see e.g. \cite{doi:10.1137/17M1153029} and \cite{MASKIN2001191}.
\begin{defn}  \label{def_psp} A stopping time $\tau$ is said to be a pure Markov stopping time, or pure stopping time for short, if  $\tau=\inf\{t\ge 0: X_t\in S\}$ for some measurable set $S\subset \mathbb{X}$ and $S$ is called the stopping region.
\end{defn}

\begin{remark}
Obviously for any stopping region $S$, the value will not change if we add or remove an absorbing state from $S$. Therefore, without loss of generality we may assume a stopping region always contains all the absorbing states. The similar argument applies to the cases when we discuss randomized stopping and liquidation strategies later on.
\end{remark}

A pure stopping time governs when the agent should stop. The decision whether to stop or not depends directly on the current state $x$ and not on the past path of process $X$. A corresponding subgame perfect Nash equilibrium based on pure Markov stopping time is defined as the following.
\begin{defn} \label{pure eq} A pure Markov stopping time $\tau$ with stopping region $S$ is said to be an equilibrium stopping time for \eqref{K_p} if
\begin{equation}\label{ee7}
x\ge K_p(x, \rho(x, S)),\,\, \forall x\in S \quad \quad {\rm and}\quad \quad x\leq K_p(x, \rho(x, S)),\,\, \forall x\notin S, 
\end{equation}
where $\rho(x, S):=\inf\{n\ge 1: X^x_n\in S\}\in \mathcal{T}.$ 
\end{defn}
The next result shows that an equilibrium stopping time may not exist.
\begin{prop} \label{eg013610}An equilibrium stopping time does not always exist.
\end{prop}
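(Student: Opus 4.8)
The plan is to prove the statement by exhibiting an explicit counterexample: a time-homogeneous Markov chain on a finite state space $\mathbb{X}$ for which none of the candidate stopping regions satisfies the equilibrium conditions \eqref{ee7}. Because of the Remark following Definition~\ref{def_psp}, we may assume every stopping region contains all absorbing states, so if the chain has $k$ transient states there are only $2^k$ candidate regions to rule out, a finite check. A first observation guiding the construction is that if the transient states form a directed acyclic structure --- that is, the continuation from each transient state depends only on strictly ``downstream'' states --- then one can always build an equilibrium by assigning states in topological order: at each state the continuation value is already determined, and exactly one of the two inequalities in \eqref{ee7} must hold. Hence a genuine counterexample must contain a cycle among the transient states, so that the continuation value at one transient state depends on whether another lies in $S$, and vice versa.

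Accordingly, I would take two transient states $u$ and $v$ together with a small number of absorbing states, arranging the transitions so that $u$ moves to $v$ or to an absorbing state, while $v$ moves to $u$ or to an absorbing state, with all escape probabilities positive so that absorption --- and hence the existence of $X_\infty$ --- occurs almost surely. The four candidate regions are then $S_0$, $S_u$, $S_v$, and $S_{uv}$, obtained by adjoining the absorbing states to $\emptyset$, $\{u\}$, $\{v\}$, and $\{u,v\}$ respectively. For each candidate I would compute $\mathbb{E}_x[X_{\rho(x,S)}]$ and $\V_x[X_{\rho(x,S)}]$ for the relevant transient states, hence $K_p(x,\rho(x,S))$, and then verify that at least one inequality in \eqref{ee7} is violated: either some $x\in S$ satisfies $x<K_p(x,\rho(x,S))$ (the agent strictly prefers to continue) or some $x\notin S$ satisfies $x>K_p(x,\rho(x,S))$ (the agent strictly prefers to stop). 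The mechanism that makes this possible is precisely the non-linear standard-deviation penalty: stopping at a transient state yields the deterministic value $x$ with no penalty, whereas continuing replaces it with a mixture over absorbing states carrying a strictly positive standard-deviation cost, and by tuning the state values and branching probabilities one can make continuation attractive at one state and unattractive at the other with no globally consistent assignment.

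The main obstacle is twofold. First, because the chain may cycle $u\to v\to u\to\cdots$ before absorption, the law of $X_{\rho(x,S)}$ is a geometric-type mixture over the reachable absorbing states, so the variance entering through the square root must be evaluated carefully for each of the four regions; this is the only genuinely computational part. Second, one must choose the transition probabilities and the numerical values of the states so that all four candidate regions fail their equilibrium inequalities simultaneously --- a feasibility problem with finitely many strict inequalities. I would resolve it by selecting one explicit set of parameters and checking the four cases directly, relegating the elementary but lengthy arithmetic of the mixture means and variances to the appendix.
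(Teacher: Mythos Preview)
Your proposal is correct and follows essentially the same approach as the paper: the paper's counterexample takes $c=1$ and state space $\{0,1,3,6,10\}$ with absorbing states $\{0,3,10\}$ and two mutually reachable transient states $1$ and $6$, and then verifies by direct computation that each of the four candidate stopping regions violates \eqref{ee7}. Your structural observation that a cycle among the transient states is necessary (since a DAG structure would permit an equilibrium built in topological order) is a nice motivating remark that the paper leaves implicit.
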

\begin{proof} We will prove this by giving a counterexample.
Let $c=1$. $X$ has state space $\mathbb{X}=\{0,1,3,6,10\}$ and the following transition matrix.\[
\begin{matrix}
 &\mathbf{0} & \mathbf{1}  & \mathbf{3} & \mathbf{6} & \mathbf{10}\\
\mathbf{0}& 1 & 0 & 0 & 0 &0\\
\mathbf{1}& 0.2 & 0 & 0.4 & 0.2 &0.2\\
\mathbf{3}& 0 & 0 & 1 & 0 &0\\
\mathbf{6}& 0 & 0.2 & 0 & 0 &0.8\\
\mathbf{10}& 0 & 0 & 0 & 0 &1\\
  \end{matrix}
\]

For this Markov chain, $\{0,3,10\}$ are absorbing states and $\{1,6\}$ are transient. Suppose there exists an equilibrium stopping time with stopping region $S\subset \{0,1,3,6,10\}$ and consider the following four cases.

\textbf{Case 1:}
$S=\{0,1,3,6,10\}$. We have that \[
\mathbb{P}_1(X_{\rho(1, S)}=0)=\mathbb{P}_1(X_{\rho(1,S)}=6)=\mathbb{P}_1(X_{\rho(1,S)}=10)=0.2, \,\,\,\,  \mathbb{P}_1(X_{\rho(1,S)}=3)=0.4,
\]
then \[
\mathbb{E}_1[X_{\rho(1,S)}]=4.4, \,\, \mathbb{E}_1[X_{\rho(1,S)}^2]=30.8 \Rightarrow K_p(1,\rho(1,S))=1.0177>1,
\]
which yields a contradiction.

\textbf{Case 2:}
$S=\{0,3,6,10\}$. We have that \[
\mathbb{P}_6(X_{\rho(6,S)}=0)=\mathbb{P}_6(X_{\rho(6,S)}=6)=0.04, \,\,\,\, \mathbb{P}_6(X_{\rho(6,S)}=3)=0.08, \,\,\,\, \mathbb{P}_6(X_{\rho(6,S)}=10)=0.84,
\]
then \[
\mathbb{E}_6[X_{\rho(6,S)}]=8.88,\,\,
 \mathbb{E}_6[X_{\rho(6,S)}^2]=86.16 \Rightarrow K_p(6,\rho(6,S))=6.1771>6,
\]
which yields a contradiction.

\textbf{Case 3:}
$S=\{0,1,3,10\}$. We have that \[
\mathbb{P}_6(X_{\rho(6,S)}=1)=0.2, \,\,\,\, \mathbb{P}_6(X_{\rho(6,S)}=10)=0.8,
\]
then \[
\mathbb{E}_6[X_{\rho(6,S)}]=8.2,\,\, \mathbb{E}_6[X_{\rho(6,S)}^2]=80.2 \Rightarrow K_p(6,\rho(6,S))=4.6<6,
\]
which yields a contradiction.

\textbf{Case 4:}
$S=\{0,3,10\}$. We have that \[
\mathbb{P}_1(X_{\rho(1,S)}=0)=\frac{5}{24}, \,\,\,\, \mathbb{P}_1(X_{\rho(1,S)}=3)=\frac{5}{12} ,\,\,\,\, \mathbb{P}_1(X_{\rho(1,S)}=10)=\frac{3}{8},
\]
then  \[
\mathbb{E}_1[X_{\rho(1,S)}]=5, \,\,\mathbb{E}_1[X_{\rho(1,S)}^2]=\frac{165}{4} \Rightarrow K_p(1,\rho(1,S))=0.9689<1,
\]
which yields a contradiction.
\end{proof}

\subsection{Randomized Stopping Times}

In the last section, we observed that there is no guarantee that an equilibrium stopping time exists. Therefore we will now seek an equilibrium among randomized stopping times. 

Let us first briefly recall some facts of randomized stopping times, and we refer to \cite{MR668536} for more details.  A randomized stopping time (w.r.t. the original space $(\Omega,\mathcal{F})$) is defined to be a stopping time w.r.t. the extended space $(\Omega\times[0,1], \mathcal{F}\otimes\mathcal{B}([0,1]))$. For a randomized stopping time $\gamma:\,\Omega\times[0,1]\mapsto\mathbb{N}$, its $\omega$-distribution is defined by
$$M_k(\omega):=\text{Leb}\{v:\ \gamma(\omega,v)\leq k\},\quad k\in\mathbb{N},\ \omega\in\Omega,$$
where $\text{Leb}$ is the Lebesgue measure. Intuitively, $M_k(\omega)$ represents the probability that the underlying process has stopped by time $k$ along the path $\omega$. Moreover, there is a one-to-one correspondence (up to a rearrangement) between $\gamma$ and $M$; see \cite{MR668536}. For the Markov chain $X$ and any randomized stopping time $\gamma$ with the $\omega$-distribution $M$, denote
$$\E[X_\gamma]=\E\left[M_0 X_0+\sum_{k=1}^\infty X_k(M_k-M_{k-1})+(1-M_\infty)X_\infty\right],$$
and
$${\rm Var}[X_{\gamma}]=\E[X_\gamma^2]-(\E[X_\gamma])^2,$$
 $M_\infty:=\lim_{n\rightarrow\infty}M_n$. 

\begin{defn} We say $\gamma$ is a time-homogeneous randomized stopping time, if there exists $\mathbf{p}: \mathbb{X}\to [0,1]$, such that the $\omega$-distribution of $\gamma$ satisfies
$$M_k(\cdot)=1-\prod_{n=0}^k(1-\mathbf{p}(X_k(\cdot))).$$
Here $\mathbf{p}(x)$ represents the probability to stop at state $x$, given the underlying process has not stopped yet.
We call $\mathbf{p}: \mathbb{X}\to [0,1]$ a randomized stopping strategy, and denote the set of all of them by $\mathcal{P}$.
\end{defn}

Intuitively, given a function $\mathbf{p}: \mathbb{X}\to [0,1]$, we can design $n$ biased coins, where $n$ is the number of states in $\mathbb{X}$. When we are at state $X_k = x$, we will flip the coin with probability $\mathbf{p}(x)$ it comes up heads. If it comes up heads, we will stop. Otherwise, we will continue. In general, we can design more complicated strategies about flipping coins, which will fit in with general randomized stopping times, not just time-homogeneous randomized stopping times.

For any $\mathbf{p},\mathbf{q}\in\mathcal{P}$, denote $\gamma^{\mathbf{q}\otimes\mathbf{p}}$ as the randomized stopping time with the $\omega$-distribution
$$M_0=\mathbf{q}(X_0),\quad\text{and}\quad M_k=1-(1-\mathbf{q}(X_0))\prod_{n=1}^k(1-\mathbf{p}(X_k)),\quad k=1,2,\dotso.$$
We sometimes also write $\gamma^\p$ instead of $\gamma^{\p\otimes\p}$ for short. With a bit abuse of notation, we use $\E[X_{\mathbf{q}\otimes\mathbf{p}}]$  to represent $\E[X_{\gamma^{\mathbf{q}\otimes\mathbf{p}}}]$, and ${\rm Var}[X_{\mathbf{q}\otimes\mathbf{p}}]$ to represent ${\rm Var}[X_{\gamma^{\mathbf{q}\otimes\mathbf{p}}}]$.

In Definition \ref{pure eq}, an equilibrium stopping time is a subgame perfect Nash equilibrium in which all players use pure Markov stopping times. Now we propose to consider an equilibrium randomized stopping strategy which is a subgame perfect Nash equilibrium in the game where all players use time-homogeneous randomized stopping times and their preferences are identified with the following objective function
\begin{equation}
K_r(x, \mathbf{p}):=\mathbb{E}_x[X_{\mathbf{p}}]-c({\rm Var}_x[X_{\mathbf{p}}])^{1/2}, \label{K_r}
\end{equation}
where $\p$ is a randomized stopping strategy and the subscript ``$r$'' in $K_r$ stands for ``randomized stopping strategy''.

\begin{defn} \label{mixed eqr} 
$\mathbf{p}\in \mathcal{P}$ is said to be an equilibrium randomized stopping strategy for (\ref{K_r}), if for any mapping $\mathbf{q}: \mathbb{X}\to [0,1]$,
\begin{equation}\label{ee2}
K_r(x, \mathbf{q}\otimes \mathbf{p})\le K_r(x, \p\otimes\mathbf{p}), \,\,\, \forall x\in \mathbb{X},
\end{equation}
where $K_r(x, \mathbf{q}\otimes \mathbf{p})$ is from \eqref{K_r} by replacing $\p$ with $\p\otimes\q$.
\end{defn}

The randomized strategy is also called ``mixed strategy'' in game theory, i.e., an assignment of a probability to each pure strategy. In our context, we assign probability $\mathbf{p}(x)$ to the pure strategy ``to stop at state $x$'' and probability $1-\mathbf{p}(x)$ to the pure strategy ``not to stop at state $x$''.

If the randomized stopping strategy $\p\in\mathcal{P}$ satisfies that $\p(x) \in \{0, 1\}$ for any $x\in\mathbb{X}$, then it is actually a pure strategy, which is to stop at state $x$ if $\p(x)=1$ and not to stop at state $x$  if $\p(x)=0$. In this case, it simply gives us a pure stopping time with stopping region $\{x\in\mathbb{X}:\ \p(x)=1\}$. We have the following result, which together with Proposition \ref{eg013610}  implies that an equilibrium randomized stopping strategy does not always exist.

\begin{prop} \label{pp1}
If $\p\in\mathcal{P}$ is an equilibrium randomized stopping strategy for (\ref{K_r}), then $\p(x)=0$ or $1$ for any (transient state) $x\in\mathbb{X}$. Conversely, if there is an equilibrium stopping time with stopping region $S$, then $\p\in\mathcal{P}$ defined by
\begin{equation}\label{ee5}
\p(x)=
\begin{cases}
1,& x\in S,\\
0,&x\notin S,
\end{cases}
\end{equation}
is an equilibrium randomized stopping strategy. Consequently, an equilibrium randomized stopping strategy does not always exist.
\end{prop}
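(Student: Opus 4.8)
The plan is to reduce both directions to a single one-dimensional convexity observation. Fix a state $x$ and an arbitrary strategy $\p$, and regard the continuation from time $1$ onward (``move one step from $x$ and then follow $\p$'') as frozen, with conditional mean $m_1$ and second moment $s_1$; set $\sigma_1^2=s_1-m_1^2$ and $d_1=x-m_1$. Under $\q\otimes\p$ one stops at $x$ with probability $a:=\q(x)$ and otherwise incurs this continuation, so $\E_x[X_{\q\otimes\p}]=ax+(1-a)m_1=m_1+a d_1$ and $\E_x[X_{\q\otimes\p}^2]=ax^2+(1-a)s_1$. A direct computation collapses the variance to the clean form
\begin{equation*}
\V_x[X_{\q\otimes\p}]=(1-a)\bigl(\sigma_1^2+a\,d_1^2\bigr),
\end{equation*}
a concave quadratic in $a$ (leading coefficient $-d_1^2\le 0$). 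Hence $f(a):=K_r(x,\q\otimes\p)=m_1+a d_1-c\sqrt{(1-a)(\sigma_1^2+a d_1^2)}$ is affine minus $c$ times the concave square root of a concave function, so $f$ is convex on $[0,1]$; the second–derivative identity $(\sqrt v)''=(2v v''-(v')^2)/(4v^{3/2})$ shows this is strict unless $d_1=0$ and $\sigma_1=0$ together, i.e. unless the continuation is the point mass at $x$.

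For the first assertion, the equilibrium condition \eqref{ee2} says precisely that $a=\p(x)$ maximizes $f$ over $[0,1]$. A strictly convex function on $[0,1]$ attains its maximum only at an endpoint, forcing $\p(x)\in\{0,1\}$ at every transient state; in the excluded degenerate case the continuation equals $x$ almost surely, $f$ is constant, and $\p(x)$ is immaterial and may be taken in $\{0,1\}$.

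For the converse, take $\p=\1_S$ as in \eqref{ee5}. The continuation ``move and follow $\1_S$'' is the pure stopping time $\rho(x,S)$, so $m_1=\E_x[X_{\rho(x,S)}]$ and $\sigma_1=(\V_x[X_{\rho(x,S)}])^{1/2}$, giving $f(0)=K_p(x,\rho(x,S))$ and $f(1)=x$. By convexity $\max_{a\in[0,1]}f(a)=\max\{f(0),f(1)\}$, and \eqref{ee7} states exactly that $f(1)\ge f(0)$ for $x\in S$ and $f(1)\le f(0)$ for $x\notin S$; in either case the endpoint $a=\p(x)$ is the global maximizer, so $K_r(x,\q\otimes\p)\le K_r(x,\p\otimes\p)$ for every $\q$ and every $x$, i.e. $\1_S$ is an equilibrium randomized strategy. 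For the ``consequently'' claim I would run the first direction in reverse: an equilibrium randomized $\p$ is pure on transient states and may be taken to stop on absorbing states, say $\p=\1_S$; reading the endpoint comparison $f(\p(x))=\max\{f(0),f(1)\}$ off the fact that $\p(x)$ maximizes the convex $f$ recovers \eqref{ee7}, so $S$ is an equilibrium stopping region. Thus an equilibrium randomized strategy would yield an equilibrium stopping time, contradicting Proposition~\ref{eg013610} for the chain there.

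I expect the main obstacle to be establishing the strict convexity cleanly: obtaining the simplified identity $\V_x[X_{\q\otimes\p}]=(1-a)(\sigma_1^2+a d_1^2)$, verifying via the second–derivative computation that the square root is strictly concave except at the lone degenerate point mass at $x$, and then isolating that degenerate case so that the conclusion $\p(x)\in\{0,1\}$ is genuinely forced rather than merely permitted.
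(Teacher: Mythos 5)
Your core computation is correct and is in substance the paper's own argument written in explicit coordinates: the identity $\V_x[X_{\q\otimes\p}]=(1-a)(\sigma_1^2+a\,d_1^2)$ is exactly the content of Lemma~\ref{ll1} (superadditivity of variance under mixing, together with its equality cases) specialized to the mixture of the point mass ``stop now at $x$'' with the continuation $\0\otimes\p$, and maximizing the convex function $f$ at an endpoint replaces the paper's sandwich $K_r(x,\p\otimes\p)\le\lambda K_r(x,\alpha)+(1-\lambda)K_r(x,\beta)\le K_r(x,\p\otimes\p)$ followed by its equality analysis. Your converse direction ($f(1)\ge f(0)$ on $S$, $f(0)\ge f(1)$ off $S$, convexity placing the maximum at the endpoint $\p(x)$) is a correct, slightly streamlined version of the paper's cases (i)/(ii).

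The genuine gap is in the first assertion. The proposition states that an equilibrium must have $\p(x)\in\{0,1\}$ at \emph{every transient state}; your proof yields this only outside the degenerate case $d_1=\sigma_1=0$, which you dismiss as ``immaterial.'' That is not enough: if the degenerate case could occur at a transient $x$, then $f$ would be constant, every $a\in[0,1]$ would be a maximizer, and a fractional $\p(x)$ would belong to a legitimate equilibrium --- the stated claim would then simply be false (this is exactly what happens at absorbing states, which is why the proposition carries the parenthetical restriction). The missing step is the one place where transience enters, and it is the closing step of the paper's proof: if the continuation from a transient $x$ equaled $x$ a.s., the chain would have to be at $x$ when stopped or satisfy $X_\infty=x$; but with positive probability a transient chain never returns to $x$ after time $0$, and on that event (finite state space, so $X_n$ is eventually constant) the continuation value differs from $x$. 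Hence $d_1\ne0$ or $\sigma_1>0$ at every transient state, and strict convexity genuinely forces $\p(x)\in\{0,1\}$ there. Without this, your ``run the first direction in reverse'' step for the non-existence conclusion is also incomplete, since it invokes purity on transient states. (Incidentally, the obstacle you anticipated --- the strict-convexity computation --- is fine as you set it up; the transience argument is the step that needs to be added.)
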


For the proof of this proposition we will need the following result:

\begin{lemma}\label{ll1}
Let $\gamma_1,\gamma_2,\gamma$ be randomized stopping times and $\lambda\in(0,1)$, such that
$$\PP(\gamma=\gamma_1)=1-\PP(\gamma=\gamma_2)=\lambda.$$
(Denote $\gamma=\lambda\gamma_1\oplus(1-\lambda)\gamma_2$.) Then
$${\rm Var}[X_{\lambda\gamma_1\oplus(1-\lambda)\gamma_2}]\geq \lambda{\rm Var}[X_{\gamma_1}]+(1-\lambda){\rm Var}[X_{\gamma_2}]\geq\left(\lambda({\rm Var}[X_{\gamma_1}])^{1/2}+(1-\lambda)({\rm Var}[X_{\gamma_2}])^{1/2}\right)^2.$$
Moreover, the first equality holds if and only if $\E[X_{\gamma_1}]=\E[X_{\gamma_2}]$, and the second equality holds if and only if ${\rm Var}[X_{\gamma_1}]={\rm Var}[X_{\gamma_2}]$.
\end{lemma}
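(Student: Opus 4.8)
The plan is to reduce both inequalities to elementary facts about the variance of a mixture distribution together with the concavity of the square root, and to read off the equality conditions from the explicit remainder terms.

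First I would observe that, because $\gamma$ equals $\gamma_1$ with probability $\lambda$ and $\gamma_2$ with probability $1-\lambda$, the law of $X_\gamma$ is the mixture $\lambda\,\mathrm{Law}(X_{\gamma_1})+(1-\lambda)\,\mathrm{Law}(X_{\gamma_2})$. Hence the first two moments split linearly:
$$\E[X_\gamma]=\lambda\E[X_{\gamma_1}]+(1-\lambda)\E[X_{\gamma_2}],\qquad \E[X_\gamma^2]=\lambda\E[X_{\gamma_1}^2]+(1-\lambda)\E[X_{\gamma_2}^2].$$
(Equivalently, one could condition on the Bernoulli switch that selects $\gamma_1$ or $\gamma_2$ and invoke the law of total variance; the computation is the same.) Writing $m_i:=\E[X_{\gamma_i}]$ and substituting into ${\rm Var}[X_\gamma]=\E[X_\gamma^2]-(\E[X_\gamma])^2$, a short algebraic manipulation yields the exact identity
$${\rm Var}[X_\gamma]=\lambda\,{\rm Var}[X_{\gamma_1}]+(1-\lambda)\,{\rm Var}[X_{\gamma_2}]+\lambda(1-\lambda)(m_1-m_2)^2.$$
The remainder $\lambda(1-\lambda)(m_1-m_2)^2$ is exactly the variance of the two-point distribution that the conditional mean $\E[X_\gamma\mid\text{switch}]$ follows; it is nonnegative, which is precisely the first inequality, and it vanishes if and only if $m_1=m_2$, i.e. $\E[X_{\gamma_1}]=\E[X_{\gamma_2}]$. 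This settles the first inequality and its equality case.

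For the second inequality I would set $a:={\rm Var}[X_{\gamma_1}]\geq 0$ and $b:={\rm Var}[X_{\gamma_2}]\geq 0$ and simply complete the square. Using $a=(\sqrt a)^2$, $b=(\sqrt b)^2$ one obtains
$$\lambda a+(1-\lambda)b-\left(\lambda\sqrt a+(1-\lambda)\sqrt b\right)^2=\lambda(1-\lambda)\left(\sqrt a-\sqrt b\right)^2\geq 0,$$
which is the second inequality (equivalently, it is Jensen's inequality for the concave map $t\mapsto\sqrt t$). The right-hand side vanishes if and only if $\sqrt a=\sqrt b$, i.e. ${\rm Var}[X_{\gamma_1}]={\rm Var}[X_{\gamma_2}]$, giving the second equality condition.

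There is no genuinely hard step here; the whole argument is a variance decomposition plus one completion of a square. The only point requiring a little care is the very first one --- correctly identifying the law of $X_\gamma$ as the $\lambda$-mixture of the laws of $X_{\gamma_1}$ and $X_{\gamma_2}$ (including the contributions of $X_\infty$ on $\{\gamma=\infty\}$ in the definition of $\E[X_\gamma]$), so that the moment-splitting identities above are justified. Once that is in place, both inequalities and both equality characterizations follow immediately from the two displayed algebraic identities.
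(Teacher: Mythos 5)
Your proposal is correct and follows essentially the same route as the paper: both identify the law of $X_\gamma$ as the $\lambda$-mixture of the laws of $X_{\gamma_1}$ and $X_{\gamma_2}$, split the first two moments linearly, and reduce both inequalities to elementary convexity facts. The only difference is cosmetic but welcome: where the paper invokes Jensen's inequality and dismisses the second inequality and the equality cases as ``easy to check,'' you write out the exact remainders $\lambda(1-\lambda)(m_1-m_2)^2$ and $\lambda(1-\lambda)\left(\sqrt{a}-\sqrt{b}\right)^2$, which makes both equality characterizations immediate.
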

\begin{proof}
We have that
\begin{align*}
{\rm Var}[X_{\lambda\gamma_1\oplus(1-\lambda)\gamma_2}]&=\mathbb{E}[X_{\lambda\gamma_1\oplus(1-\lambda)\gamma_2}^2]-(\mathbb{E}[X_{\lambda\gamma_1\oplus(1-\lambda)\gamma_2}])^2\\
&=\lambda\mathbb{E}[X_{\gamma_1}^2]+(1-\lambda)\mathbb{E}[X_{\gamma_2}^2]-\left(\lambda\mathbb{E}[X_{\gamma_1}]+(1-\lambda)\mathbb{E}[X_{\gamma_2}]\right)^2\\
&\geq\lambda\mathbb{E}[X_{\gamma_1}^2]+(1-\lambda)\mathbb{E}[X_{\gamma_2}^2]-\left(\lambda(\mathbb{E}[X_{\gamma_1}])^2+(1-\lambda)(\mathbb{E}[X_{\gamma_2}])^2\right)\\
&=\lambda{\rm Var}[X_{\gamma_1}]+(1-\lambda){\rm Var}[X_{\gamma_2}].
\end{align*}
 We obtain the inequality using Jensen's inequality.
The rest of the result is easy to check.
\end{proof}

\begin{proof}[Proof of Proposition \ref{pp1}]
Let $\p\in\mathcal{P}$ be an equilibrium randomized stopping strategy for (\ref{K_r}). Suppose there exists a transient state $x\in\mathbb{X}$ such that $0<\lambda:=\p(x)<1$. Denote
\begin{equation}\label{ee6}
\alpha:=\1\otimes\p\quad\text{and}\quad\beta:=\0\otimes\p,
\end{equation}
where $\1\in\mathcal{P}$ (resp. $\0\in\mathcal{P}$) is the strategy with all components $1$ (resp. $0$).  We have the following.
\begin{align}
\notag K_r(x,\p\otimes\p)&=K_r(x,\lambda\alpha\oplus(1-\lambda)\beta)\\
\notag &=\E\left[X_{\lambda\alpha\oplus(1-\lambda)\beta}\right]-c\left({\rm Var}[X_{\lambda\alpha\oplus(1-\lambda)\beta}]\right)^{1/2}\\
\label{ee3} &\leq\lambda\E[X_\alpha]+(1-\lambda)\E[X_\beta]-c\left(\lambda\left({\rm Var}[X_\alpha]\right)^{1/2}+(1-\lambda)\left({\rm Var}[X_\beta]\right)^{1/2}\right)\\
\notag &=\lambda K_r(x,\alpha)+(1-\lambda) K_r(x,\beta)\\
\label{ee4} &\leq K_r(x,\p\otimes\p),
\end{align}
where \eqref{ee3} follows from Lemma \ref{ll1} and \eqref{ee4} follows from \eqref{ee2}. This implies that equality holds for \eqref{ee3}. By Lemma \ref{ll1}
$$x=X_\alpha=X_\beta.$$
Since the state $x$ is transient, there is a positive probability that the Markov chain never returns back to $x$. As a result, it is not possible that $X_\beta=x$ with probability $1$. 

Conversely, assume there is an equilibrium stopping time with stopping region $S$, and define $\p\in\mathcal{P}$ as in \eqref{ee5}. Let $\q\in\mathcal{P}$ and $x\in\mathbb{X}$. Denote $\lambda':=\q(x)$, and define $\alpha$ and $\beta$ as in \eqref{ee6}. We consider two cases:\\
(i) $p(x)=1$: Then
$$K_r(x,\beta)=K_p(x,\rho(x,S))\leq x=K_r(x,\alpha).$$
Then by a similar argument as above, we have that
$$K_r(x,\q\otimes\p)\leq \lambda' K_r(x,\alpha)+(1-\lambda')K_r(x,\beta)\leq K_r(x,\alpha)=K_r(x,\p\otimes\p).$$\\
(ii) $p(x)=0$: $K_r(x, \beta)=K_r(x, \p\otimes \p)\ge K_r(x, \alpha)=x$, and thus $K_r(x, \q\otimes \p)\le K_r(x,\beta)=K_r(x, \p \otimes \p)$.
\end{proof}

\subsection{Equilibrium Liquidation Strategies}

\begin{defn}
An adapted nondecreasing process $\theta=(\theta_n)_{n\in\mathbb{N}}$ is said to be a liquidation strategy, if $\theta_0\geq 0$, and
$$\lim_{n\rightarrow\infty}\theta_n\leq 1,\ a.s..$$
A liquidation strategy $\theta$ is said to be time homogeneous, if there exists $\eta:\,\mathbb{X}\mapsto[0,1]$, such that along any path $(x_n)_{n\in\mathbb{N}}\in\mathbb{X}^\infty$,
\begin{equation*}
\theta_n(x_0,\dotso,x_n)=1-\prod_{i=0}^n(1-\eta(x_i)) \label{theta(X)}.
\end{equation*}
Denote by $\mathcal{L}$ the collection of all time-homogeneous liquidation strategies.
\end{defn}

Consider the objective function 
\begin{equation}
K_l(x,\theta):=\mathbb{E}_x[\theta(X)]-c({\rm Var}_x[\theta(X)])^{1/2}, \label{K_l}
\end{equation}
where the subscript ``$l$'' in $K_l(x, \theta)$ stands for ``liquidation strategy'' and $\theta(X)$ is the payoff under liquidation strategy $\theta$ 
\begin{equation*}
\theta(X)=X_0\theta_0+\sum_{n=1}^{\infty}X_n(\theta_n-\theta_{n-1})+X_{\infty}(1-\theta_{\infty}).
\end{equation*}
If $\theta=\theta^{\eta}\in \mathcal{L}$ is a time-homogeneous liquidation strategy, then
\begin{equation}\label{thetaX}
\begin{split}
\theta^{\eta}(X)=&\,\eta(X_0)X_0+(1-\eta(X_0))\bigg[\eta(X_1)X_1\\
&+\sum_{k=2}^{\infty}(1-\eta(X_1))\cdots(1-\eta(X_{k-1}))\eta(X_k)X_k+\prod_{k=1}^{\infty}(1-\eta(X_k))X_{\infty}\bigg].
\end{split}
\end{equation}

Intuitively liquidation strategy means to liquidate the asset at several periods instead of at one time. Such strategy is very common in practice. For instance, when an investor has a large amount of identical asset, e.g., 10000 shares of  American option, she may excise these shares at different times instead of once. In the following, we use an example to illustrate the motivation to consider liquidation strategies. In particular, we will show that if $X$ is divisible, then it is possible that the optimal value for pure stopping time $\sup_{\tau}K_p(x, \tau)$ is strictly less than $K_l(x, \theta)$ for some liquidation strategy $\theta$ and some $x\in \mathbb{X}$.

\begin{Example} \label{eg0123}
Let $c=1/(\sqrt{44}-5)$. $X$ has the following transition matrix.
$$
\begin{array}{ccccc}
&{\bf0}&{\bf1}&{\bf2}&{\bf3}\\
{\bf0}&1&0&0&0\\
{\bf1}&\frac{1}{6}&0&\frac{1}{2}&\frac{1}{3}\\
{\bf2}&\frac{1}{5}&0&0&\frac{4}{5}\\
{\bf3}&0&0&0&1
\end{array}
$$
Then it is easy to see that the optimal stopping value for $K_p(x, \tau)$ is given by
$$\sup_\tau K_p(1,\tau)=K_p(1,\tau')=K_p(1,\tau'')=2-c=1.3877,$$
where
$$\tau':=\inf\{n\geq 0:\ X_n=0,2,3\},$$
and
$$\tau'':=\inf\{n\geq 0:\ X_n=0,3\}.$$
Now consider the liquidation strategy $\theta'$ given by
$$\theta_0'(\cdot)=0,\quad\theta_1'(\cdot,2)=1/2,\quad\text{and}\ \theta_n'=0\ \text{for all other cases}.$$
Then it is easy to see that
$$\theta'(X^1)=\frac{1}{2}X_{\tau'}^1+\frac{1}{2}X_{\tau''}^1.$$
Then the distribution of $\theta(X^1)$ is given by
$$\PP(\theta'(X^1)=0)=\frac{1}{6},\quad\PP(\theta'(X^1)=1)=\frac{1}{10},\quad\PP(\theta'(X^1)=5/2)=\frac{2}{5},\quad\PP(\theta'(X^1)=3)=\frac{1}{3}.$$
Therefore, we have that
$$\sup_\theta K_l(1,\theta)\geq K_l(1,\theta')=\frac{21}{10}-\frac{\sqrt{119}}{10}c=1.4321>\sup_\tau K_p(1,\tau).$$
\end{Example}

\begin{remark}\label{rem:diff-lq-rand}
As discussed in Remark 3.1 in Bayraktar and Zhou's paper \cite{bayraktar2016}, there is a one-to-one correspondence between the set of time-homogeneous liquidation strategies $\mathcal{L}$ and the set of time-homogeneous randomized stopping times $\mathcal{P}$. But the paths of a liquidation strategy and a randomized stopping time are quite different. First of all, in terms of behavior, when using a  randomized stopping time, we flip a coin at each period to decide whether we stop or not, and we still liquidate the whole unit asset over a single period. Second, in terms of variance, randomized stopping time will result in a larger variance, since the overall variance will include the part from randomization of the stopping time, while liquidation strategy results in a smaller variance, since averaging random variable leads to a smaller variance. This can also be seen from Lemma \ref{ll1} and the following result.
\end{remark}

\begin{lemma} \label{lemma0}
Let $\theta_1,\theta_2$ be two liquidation strategies and $\lambda\in(0,1)$. Then $\lambda\theta_1+(1-\lambda)\theta_2$ is also a liquidation strategy and
\begin{align}
\notag{\rm Var}[(\lambda\theta_1+(1-\lambda)\theta_2)(X)]&\leq\left(\lambda({\rm Var}[\theta_1(X)])^{1/2}+(1-\lambda)({\rm Var}[\theta_2(X)])^{1/2}\right)^2\\
\notag&\leq\lambda{\rm Var}[\theta_1(X)]+(1-\lambda){\rm Var}[\theta_2(X)].
\end{align}
\end{lemma}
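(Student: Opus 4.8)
The plan is to reduce the entire statement to the \emph{linearity} of the payoff functional $\theta\mapsto\theta(X)$ together with two elementary inequalities. First I would check the membership claim: if $\theta_1,\theta_2$ are liquidation strategies and $\lambda\in(0,1)$, then $\theta:=\lambda\theta_1+(1-\lambda)\theta_2$ is adapted and nondecreasing (a convex combination of adapted nondecreasing processes is again such), satisfies $\theta_0=\lambda\theta_{1,0}+(1-\lambda)\theta_{2,0}\geq0$, and has $\lim_{n\to\infty}\theta_n\leq\lambda+(1-\lambda)=1$ almost surely. Hence $\theta\in\mathcal{L}$ in the sense of the definition.

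The crucial observation is that the map sending a liquidation strategy to its payoff is affine with the weights summing to one, so it respects convex combinations. Concretely, using $\theta(X)=X_0\theta_0+\sum_{n\geq1}X_n(\theta_n-\theta_{n-1})+X_\infty(1-\theta_\infty)$, each term is linear in the increments except the terminal term $X_\infty(1-\theta_\infty)$; since $\lambda+(1-\lambda)=1$ one gets $1-\theta_\infty=\lambda(1-\theta_{1,\infty})+(1-\lambda)(1-\theta_{2,\infty})$, and therefore $(\lambda\theta_1+(1-\lambda)\theta_2)(X)=\lambda\,\theta_1(X)+(1-\lambda)\,\theta_2(X)$. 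Because $\mathbb{X}$ is finite, all these random variables are bounded, so the infinite sum converges and all expectations and variances are finite; this is where one should be a little careful, but it is the only integrability concern.

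Writing $Y_i:=\theta_i(X)$ and $\sigma_i:=({\rm Var}[Y_i])^{1/2}$, the first inequality is then exactly the covariance bound: expanding ${\rm Var}[\lambda Y_1+(1-\lambda)Y_2]=\lambda^2\sigma_1^2+(1-\lambda)^2\sigma_2^2+2\lambda(1-\lambda)\,{\rm Cov}(Y_1,Y_2)$ and applying Cauchy--Schwarz, ${\rm Cov}(Y_1,Y_2)\leq\sigma_1\sigma_2$, yields ${\rm Var}[\lambda Y_1+(1-\lambda)Y_2]\leq(\lambda\sigma_1+(1-\lambda)\sigma_2)^2$. The second inequality $(\lambda\sigma_1+(1-\lambda)\sigma_2)^2\leq\lambda\sigma_1^2+(1-\lambda)\sigma_2^2$ is just the convexity of $t\mapsto t^2$ (Jensen's inequality). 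Combining the two displays gives the claim.

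I do not expect a genuine obstacle here: the content is that the payoff is linear in $\theta$ so the convex combination of strategies produces a convex combination of the \emph{payoffs}, after which the two bounds are standard. The step deserving the most attention is the linearity identity for $\theta(X)$—especially the handling of the terminal term $X_\infty(1-\theta_\infty)$ and the justification of convergence—since everything downstream follows mechanically from it. It is worth emphasizing the contrast with Lemma~\ref{ll1}: there mixing \emph{randomizes the stopping time} and the variance inequalities go the other way (averaging over paths adds dispersion), whereas here averaging the \emph{amount liquidated} reduces dispersion, which is precisely why the inequalities in this lemma are reversed.
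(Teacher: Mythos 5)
Your proof is correct and follows essentially the same route as the paper's: expand the variance of $\lambda\theta_1(X)+(1-\lambda)\theta_2(X)$, bound the covariance by Cauchy--Schwarz to get the first inequality, and use convexity of $t\mapsto t^2$ for the second. The only difference is that you spell out the membership claim and the linearity identity $(\lambda\theta_1+(1-\lambda)\theta_2)(X)=\lambda\theta_1(X)+(1-\lambda)\theta_2(X)$, which the paper uses implicitly in its first equality.
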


\begin{proof}
We have that \begin{align*}
&{\rm Var}[(\lambda\theta_1+(1-\lambda)\theta_2)(X)]={\rm Var}[\lambda\theta_1(X)+(1-\lambda)\theta_2(X)]\\
&=\lambda^2{\rm Var}[\theta_1(X)]+(1-\lambda)^2{\rm Var}[\theta_2(X)]+2\lambda(1-\lambda)\text{Cov}[\theta_1(X),\theta_2(X)]\\
&\leq\lambda^2{\rm Var}[\theta_1(X)]+(1-\lambda)^2{\rm Var}[\theta_2(X)]+2\lambda(1-\lambda)({\rm Var}[\theta_1(X)])^{1/2}({\rm Var}[\theta_1(X)])^{1/2}\\
&=\left(\lambda({\rm Var}[\theta_1(X)])^{1/2}+(1-\lambda)({\rm Var}[\theta_2(X)])^{1/2}\right)^2.
\end{align*}
The second inequality is easy to check.
\end{proof}
Our next goal is to analyze the subgame perfect Nash equilibrium in the game where all players use time-homogeneous liquidation strategies. Notice that each time-homogeneous liquidation strategy is characterized by a function $\eta(x)$ that represents the proportion of the remaining asset we will liquidate when the Markov chain moves to position $x$. $\eta(x)$ is independent of time and the history of the paths. For simplicity of notation, we use $K_l(x, \eta)$ instead of $K_l(x, \theta^{\eta})$ for $\theta=\theta^{\eta}\in \mathcal{L}$.

\begin{defn} \label{eql} A liquidation strategy $\theta=\theta^{\eta}\in \mathcal{L}$ is said to be an equilibrium liquidation strategy for (\ref{K_l})
if for any mapping $\xi: \mathbb{X}\to [0,1]$, we have
\[K_l(x, {\xi\otimes\eta})\le K_l(x, {\eta\otimes \eta}),\,\,\, \forall x\in \mathbb{X},\]
where $\theta^{\xi\otimes \eta}$ is a perturbation of strategy $\theta^{\eta}$ in which we liquidate $\xi(\cdot)$ at time 0 and then from time 1 we liquidate $\eta(\cdot)$ proportion of the remaining asset at each period. 
\end{defn}
\begin{remark} \label{rmk1} Notice that this definition looks similar to the definition of equilibrium randomized stopping time. But as we mentioned earlier, unlike selling the whole unit of asset in one period which is random, the liquidation strategy will leave us with different proportions of asset at different periods, so the objective function might change as time goes on. Thanks to the square root term in (\ref{K_l}), mean-standard deviation problem has the scaling effect which allows Definition \ref{eql} to make perfect sense since we essentially face the same problem (with the same parameter $c$) at every period. However, we will see in Section \ref{sec_MV}, a similar definition of equilibrium liquidation strategy in mean-variance problem is not a proper definition.

Also note that a liquidation strategy should be considered as a pure strategy from the game theory point of view, with the added assumption that partial selling the asset over time is possible. In contrast, a randomized stopping strategy should be considered as a mixed strategy. 
\end{remark}

\subsection{Existence of an Equilibrium Liquidation Strategy}
In this section we will prove that in contrast to the equilibrium stopping time for (\ref{K_p}) and equilibrium randomized stopping strategy for (\ref{K_r}), an equilibrium liquidation strategy for (\ref{K_l}) always exists.
\begin{lemma} \label{lemma1}
For $\eta_n,\eta\in\mathcal{L}, n\in \mathbb{N}$, if $\eta_n\rightarrow\eta$ as $n\to \infty$, then
\[\theta^{\eta_n}(X)\rightarrow\theta^\eta(X),\quad\text{a.s.}.\]
\end{lemma}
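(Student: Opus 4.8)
The plan is to exploit the finiteness of the state space to turn the a priori infinite expression for $\theta^\eta(X)$ into a finite sum along almost every path, after which the convergence reduces to a triviality about finite sums of products.

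First I would record the basic structural fact: since $\mathbb{X}$ is finite, any convergent sequence in $\mathbb{X}$ is eventually constant (points of a finite subset of $\mathbb{R}$ are separated by some $\delta>0$). Because $X_\infty=\lim_n X_n$ exists almost surely, there is an a.s.\ finite random time $N=N(\omega)$ with $X_n=X_\infty$ for all $n\ge N$. I fix a path off the null set and work with this $N$, noting crucially that $N$ depends only on the path and not on the liquidation strategy, so the \emph{same} $N$ serves for $\eta$ and for every $\eta_n$.

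Next I would rewrite the payoff using the weights $W_k^\eta:=\prod_{i=0}^{k-1}(1-\eta(X_i))$ (with $W_0^\eta=1$), so that
\[\theta^\eta(X)=\sum_{k=0}^\infty W_k^\eta\,\eta(X_k)X_k+W_\infty^\eta X_\infty,\qquad W_\infty^\eta=\lim_k W_k^\eta.\]
The key step is to collapse the tail $k\ge N$. Using the telescoping identity $W_k^\eta\eta(X_k)=W_k^\eta-W_{k+1}^\eta$ together with $X_k=X_\infty$ for $k\ge N$, I get $\sum_{k=N}^\infty W_k^\eta\eta(X_k)X_k=(W_N^\eta-W_\infty^\eta)X_\infty$, and adding $W_\infty^\eta X_\infty$ yields the finite representation
\[\theta^\eta(X)=\sum_{k=0}^{N-1}W_k^\eta\,\eta(X_k)X_k+W_N^\eta X_\infty.\]
This telescoping disposes uniformly of both the case $\eta(X_\infty)>0$ (where $W_\infty^\eta=0$ and the geometric tail sums to $W_N^\eta X_\infty$) and the case $\eta(X_\infty)=0$ (where the tail of the sum vanishes but the absorbing term $W_\infty^\eta X_\infty=W_N^\eta X_\infty$ survives); I expect correctly handling this $\eta(X_\infty)=0$ possibility to be the one place demanding care.

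Finally, the convergence is immediate. For the fixed path, $\theta^{\eta_n}(X)$ and $\theta^\eta(X)$ are given by the same finite formula with the same $N$. Since $\eta_n\to\eta$ on the finite set $\mathbb{X}$ means $\eta_n(x)\to\eta(x)$ for each state, each factor $\eta_n(X_k)\to\eta(X_k)$ and each finite product $W_k^{\eta_n}\to W_k^\eta$ for $k\le N$; hence the finite sum converges, giving $\theta^{\eta_n}(X)\to\theta^\eta(X)$. As this holds for every path outside the null set on which $X_\infty$ fails to exist, the claimed a.s.\ convergence follows. The only genuine content is the tail-collapse step; everything after it is just continuity of a polynomial in finitely many coordinates.
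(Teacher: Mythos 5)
Your proof is correct and follows essentially the same route as the paper's: both fix a path on which $X_n$ equals $X_\infty$ from some finite time $N$ onward (with $N$ depending only on the path, not on the strategy), collapse the tail of the series into the single term $\prod_{k\le N}(1-\eta(X_k))\,X_\infty$, and then pass to the limit in the resulting finite sum using pointwise convergence of $\eta_n$ on the finite state space. The only difference is cosmetic: you spell out the telescoping identity justifying the tail collapse, which the paper asserts implicitly.
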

\begin{proof}
For a.e. $\omega\in \Omega$, there exists $N=N(\omega)$ such that for any $k\geq N$, $X_N(\omega)=X_\infty(\omega)$. Then along $\omega$, we have that
\begin{align*}
\theta^{\eta_n}(X)&=\sum_{k=0}^\infty\left((1-\eta_n(X_0))\dotso(1-\eta_n(X_{k-1}))\right)\eta_n(X_k)X_k+\prod_{k=0}^\infty(1-\eta_n(X_k))X_\infty\\
&=\sum_{k=0}^N\left((1-\eta_n(X_0))\dotso(1-\eta_n(X_{k-1}))\right)\eta_n(X_k)X_k+\prod_{k=0}^N(1-\eta_n(X_k))X_\infty\\
&\xrightarrow{n\to \infty}\sum_{k=0}^N\left((1-\eta(X_0))\dotso(1-\eta(X_{k-1}))\right)\eta(X_k)X_k+\prod_{k=0}^N(1-\eta(X_k))X_\infty\\
&=\sum_{k=0}^\infty\left((1-\eta(X_0))\dotso(1-\eta(X_{k-1}))\right)\eta(X_k)X_k+\prod_{k=0}^\infty(1-\eta(X_k))X_\infty\\
&=\theta^\eta(X).
\end{align*}
\end{proof}
\begin{lemma}\label{lemma2}
For $\xi_n,\eta_n,\xi,\eta \in\mathcal{L}, n\in \mathbb{N}$, if $\xi_n\rightarrow\xi$ and $\eta_n\rightarrow\eta$ as $n\to\infty$, then
$$K_l(x,\xi_n\otimes\eta_n)\rightarrow K_l(x,\xi\otimes\eta),\quad\forall\,x\in\mathbb{X}.$$
\end{lemma}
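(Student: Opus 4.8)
The plan is to reduce the continuity of $K_l(x,\cdot)$ to the almost sure convergence of the payoffs $\theta^{\xi_n\otimes\eta_n}(X)$, established in the spirit of Lemma \ref{lemma1}, together with a uniform boundedness argument that lets me pass the limit through both the expectation and the variance before taking the square root.

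First I would prove the analogue of Lemma \ref{lemma1} for the perturbed strategies: if $\xi_n\to\xi$ and $\eta_n\to\eta$, then $\theta^{\xi_n\otimes\eta_n}(X)\to\theta^{\xi\otimes\eta}(X)$ almost surely. The argument mirrors that of Lemma \ref{lemma1}. For almost every $\omega$ there is a finite (random) time $N=N(\omega)$ after which $X_k(\omega)=X_\infty(\omega)$; because the residual asset sitting at the absorbing value $X_\infty$ contributes $X_\infty$ no matter how it is subsequently liquidated, the infinite series defining $\theta^{\xi_n\otimes\eta_n}(X)$ collapses to a finite sum, whose terms involve $\xi_n(X_0)$, the products $(1-\eta_n(X_1))\cdots(1-\eta_n(X_{k-1}))\eta_n(X_k)$ for $k\le N$, and the tail product $\prod_{k=1}^N(1-\eta_n(X_k))$, each multiplied by a state value. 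Each such term is a polynomial in the finitely many numbers $\xi_n(X_0),\eta_n(X_1),\dots,\eta_n(X_N)$, which converge to their limits because $\mathbb{X}$ is finite and $\xi_n\to\xi$, $\eta_n\to\eta$ pointwise. Hence the finite sum converges to $\theta^{\xi\otimes\eta}(X)$ along $\omega$.

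Second I would upgrade this almost sure convergence to convergence of the first and second moments. The crucial observation is that $\theta^{\xi\otimes\eta}(X)$ is always a convex combination of the values $X_0,X_1,\dots$ and $X_\infty$ (the liquidation weights are nonnegative and sum to one), so $|\theta^{\xi\otimes\eta}(X)|\le M:=\max_{y\in\mathbb{X}}|y|$ uniformly in $\xi,\eta$. Since $\mathbb{X}$ is finite, $M<\infty$, and the constants $M$ and $M^2$ dominate $\theta^{\xi_n\otimes\eta_n}(X)$ and its square, respectively. By the bounded convergence theorem, $\mathbb{E}_x[\theta^{\xi_n\otimes\eta_n}(X)]\to\mathbb{E}_x[\theta^{\xi\otimes\eta}(X)]$ and $\mathbb{E}_x[(\theta^{\xi_n\otimes\eta_n}(X))^2]\to\mathbb{E}_x[(\theta^{\xi\otimes\eta}(X))^2]$, whence ${\rm Var}_x[\theta^{\xi_n\otimes\eta_n}(X)]\to{\rm Var}_x[\theta^{\xi\otimes\eta}(X)]$.

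Finally, applying the continuity of $t\mapsto t^{1/2}$ on $[0,\infty)$ to the variances and combining with the convergence of the means yields $K_l(x,\xi_n\otimes\eta_n)\to K_l(x,\xi\otimes\eta)$ for each fixed $x\in\mathbb{X}$. I expect the only genuinely delicate point to be the justification of interchanging limit and expectation; everything there hinges on the uniform bound coming from the finiteness of $\mathbb{X}$ and the convex-combination structure of the payoff, which supplies the dominating function needed for bounded convergence. The collapsing-to-a-finite-sum step, though notationally heavy, is routine once Lemma \ref{lemma1} is in hand.
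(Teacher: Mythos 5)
Your proof is correct, but it follows a genuinely different organization from the paper's. The paper inserts the intermediate quantity $K_l(x,\xi\otimes\eta_n)$ via the triangle inequality and then proves four separate moment estimates \eqref{e7}--\eqref{e10}: the perturbation $\xi_n\to\xi$ is handled purely algebraically, using the representation $\theta^{\xi_n\otimes\eta_n}(X)=\xi_n(x)x+(1-\xi_n(x))\theta^{\eta_n}(X_{\cdot+1})$, so that the dependence on $\xi_n$ enters only through the scalar $\xi_n(x)$ and one gets explicit bounds such as $(\alpha+|x|)\,|\xi_n(x)-\xi(x)|$ with $\alpha:=\sup\{|y|:\ y\in\mathbb{X}\}$; only the continuation part $\eta_n\to\eta$ invokes the almost sure convergence of Lemma \ref{lemma1}. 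You instead extend Lemma \ref{lemma1} to the jointly perturbed payoff $\theta^{\xi_n\otimes\eta_n}(X)$ (your collapse-after-absorption argument is sound, including the observation that the residual held at the absorbing state contributes $X_\infty$ however it is later liquidated) and then apply the bounded convergence theorem once, using the uniform bound $\bigl|\theta^{\xi\otimes\eta}(X)\bigr|\le\max_{y\in\mathbb{X}}|y|$ coming from the convex-combination structure of the liquidation weights. Your route is shorter and makes the use of dominated convergence explicit --- the paper in fact relies on it tacitly when passing from the almost sure convergence of Lemma \ref{lemma1} to the expectation limits in \eqref{e9} and \eqref{e10}. What the paper's hands-on route buys is the affine dependence of $K_l(x,\xi\otimes\eta)$ on $\xi(x)$, which is precisely the structure reused later in the characterization of equilibria (see \eqref{ee8} in the proof of Proposition \ref{lemma3}); your route buys brevity and a cleaner separation between the pathwise argument and the integration-to-the-limit step.
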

\begin{proof}
As
$$|K_l(x,\xi_n\otimes\eta_n)-K_l(x,\xi\otimes\eta)|\leq|K_l(x,\xi_n\otimes\eta_n)-K_l(x,\xi\otimes\eta_n)|+|K_l(x,\xi\otimes\eta_n)-K_l(x,\xi\otimes\eta)|,$$
it suffices to show that
\begin{align}
\label{e7}&\left|\E_x\left[\theta^{\xi_n\otimes\eta_n}(X)\right]-\E_x\left[\theta^{\xi\otimes\eta_n}(X)\right]\right| \rightarrow 0, \quad n\to \infty;\\
\label{e8}&\left|\E_x\left[\left(\theta^{\xi_n\otimes\eta_n}(X)\right)^2\right]-\E_x\left[\left(\theta^{\xi\otimes\eta_n}(X)\right)^2\right]\right| \rightarrow 0, \quad n\to \infty;\\
\label{e9}&\left|\E_x\left[\theta^{\xi\otimes\eta_n}(X)\right]-\E_x\left[\theta^{\xi\otimes\eta}(X)\right]\right| \rightarrow 0, \quad n\to \infty;\\
\label{e10}&\left|\E_x\left[\left(\theta^{\xi\otimes\eta_n}(X)\right)^2\right]-\E_x\left[\left(\theta^{\xi\otimes\eta}(X)\right)^2\right]\right| \rightarrow 0,  \quad n\to \infty.
\end{align}
We have that
\begin{align}
\notag&\left|\E_x\left[\theta^{\xi_n\otimes\eta_n}(X)\right]-\E_x\left[\theta^{\xi\otimes\eta_n}(X)\right]\right|\\
\notag=&\left|\left(\xi_n(x)x+(1-\xi_n(x))\sum_{y\in\mathbb{X}}p(x,y)\E_y\left[\theta^{\eta_n}(X)\right]\right)-\left(\xi(x)x+(1-\xi(x))\sum_{y\in\mathbb{X}}p(x,y)\E_y\left[\theta^{\eta_n}(X)\right]\right)\right|\\
\notag\leq&|x|\cdot|\xi_n(x)-\xi(x)|+\left|\sum_{y\in\mathbb{X}}p(x,y)\E_y\left[\theta^{\eta_n}(X)\right]\right|\cdot|\xi_n(x)-\xi(x)|\\
\notag\leq& (\alpha+|x|)|\xi_n(x)-\xi(x)|
\notag \rightarrow 0,  \text{as} \,\, n\to \infty,
\end{align}
where $\alpha:=\sup\{|y|:\ y\in\mathbb{X}\}$. Hence, we have \eqref{e7} holds.

Noticing that
$$\theta^{\xi_n\otimes\eta_n}(X)=\xi_n(x)x+(1-\xi_n(x))\theta^{\eta_n}(X_{\cdot+1}),$$
we have that
\begin{align}
\notag&\quad\, \left|\E_x\left[\left(\theta^{\xi_n\otimes\eta_n}(X)\right)^2\right]-\E_x\left[\left(\theta^{\xi\otimes\eta_n}(X)\right)^2\right]\right|\\
\notag&=\Big|\E_x\left[(\xi_n(x))^2x^2+2\xi_n(x)x(1-\xi_n(x))\theta^{\eta_n}(X_{\cdot+1})+(1-\xi_n(x))^2(\theta^{\eta_n}(X_{\cdot+1}))^2\right]\\
\notag&\quad\quad-\E_x\left[(\xi(x))^2x^2+2\xi(x)x(1-\xi(x))\theta^{\eta_n}(X_{\cdot+1})+(1-\xi(x))^2(\theta^{\eta_n}(X_{\cdot+1}))^2\right]\Big|\\
\notag&\leq\left|(\xi_n(x))^2x^2-(\xi(x))^2x^2\right|+2|\xi_n(x)x(1-\xi_n(x))-\xi(x)x(1-\xi(x))|\cdot\left|\E_x\left[\theta^{\eta_n}(X_{\cdot+1})\right]\right|\\
\notag&\quad\quad+\left|(1-\xi_n(x))^2-(1-\xi(x))^2\right|\cdot\left|\E_x\left[(\theta^{\eta_n}(X_{\cdot+1}))^2\right]\right|\\
\notag&\leq\left|(\xi_n(x))^2x^2-(\xi(x))^2x^2\right|+2|\xi_n(x)x(1-\xi_n(x))-\xi(x)x(1-\xi(x))|\cdot\alpha\\
\notag&\quad\quad+\left|(1-\xi_n(x))^2-(1-\xi(x))^2\right|\cdot\alpha^2
\notag \rightarrow 0 ,  \text{as} \,\, n\to \infty,
\end{align}
and thus \eqref{e8} follows.

Moreover,
\begin{align}
\notag&\left|\E_x\left[\theta^{\xi\otimes\eta_n}(X)\right]-\E_x\left[\theta^{\xi\otimes\eta}(X)\right]\right|\\
\notag=&\left|\left(\xi(x)x+(1-\xi(x))\sum_{y\in\mathbb{X}}p(x,y)\E_y\left[\theta^{\eta_n}(X)\right]\right)-\left(\xi(x)x+(1-\xi(x))\sum_{y\in\mathbb{X}}p(x,y)\E_y\left[\theta^{\eta}(X)\right]\right)\right|\\
\notag\leq&(1-\xi(x))\sum_{y\in\mathbb{X}}p(x,y)\left|\E_y\left[\theta^{\eta_n}(X)\right]-\E_y\left[\theta^{\eta}(X)\right]\right|
\notag \rightarrow 0,  \text{as} \,\, n\to \infty,
\end{align}
where the last line follows from Lemma \ref{lemma1}, and thus \eqref{e9} holds.

Finally,
\begin{align}
\notag&\quad\,\left|\E_x\left[\left(\theta^{\xi\otimes\eta_n}(X)\right)^2\right]-\E_x\left[\left(\theta^{\xi\otimes\eta}(X)\right)^2\right]\right|\\
\notag&=\Big|\E_x\left[(\xi(x))^2x^2+2\xi(x)x(1-\xi(x))\theta^{\eta_n}(X_{\cdot+1})+(1-\xi(x))^2(\theta^{\eta_n}(X_{\cdot+1}))^2\right]\\
\notag&\quad\quad-\E_x\left[(\xi(x))^2x^2+2\xi(x)x(1-\xi(x))\theta^{\eta}(X_{\cdot+1})+(1-\xi(x))^2(\theta^{\eta}(X_{\cdot+1}))^2\right]\Big|\\
\notag&\leq2\xi(x)|x|(1-\xi(x))\left|\E_x\left[\theta^{\eta_n}(X_{\cdot+1})\right]-\E_x\left[\theta^{\eta}(X_{\cdot+1})\right]\right|\\
\notag&\quad\quad+(1-\xi(x))^2\left|\E_x\left[(\theta^{\eta_n}(X_{\cdot+1}))^2\right]-\E_x\left[(\theta^{\eta}(X_{\cdot+1}))^2\right]\right|
\notag \rightarrow 0,  \text{as} \,\, n\to \infty,
\end{align}
which implies \eqref{e10}.
\end{proof}

\begin{thm} \label{thm1}
There exists an equilibrium liquidation strategy for the mean-standard deviation problem (\ref{K_l}).
\end{thm}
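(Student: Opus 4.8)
The plan is to recast the equilibrium condition in Definition \ref{eql} as a fixed point of a best-response correspondence on the compact convex set $\mathcal{L}\cong[0,1]^{\mathbb{X}}$ (recall that $\mathbb{X}$ is finite) and then invoke Kakutani's fixed point theorem, with Lemma \ref{lemma2} supplying the continuity that Kakutani requires.

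First I would record the algebraic observation that makes the argument work, and which is precisely the ``scaling effect'' highlighted in Remark \ref{rmk1}. Fix $\eta\in\mathcal{L}$ and a state $x$. Writing $\theta^{\xi\otimes\eta}(X)=\xi(x)x+(1-\xi(x))\theta^{\eta}(X_{\cdot+1})$ as in the proof of Lemma \ref{lemma2}, and noting that $\xi(x)x$ is a constant while $\theta^{\eta}(X_{\cdot+1})$ is the only random part, so that ${\rm Var}_x[\theta^{\xi\otimes\eta}(X)]=(1-\xi(x))^2\,{\rm Var}_x[\theta^{\eta}(X_{\cdot+1})]$ with $1-\xi(x)\geq 0$, one finds
\begin{equation*}
K_l(x,\xi\otimes\eta)=\xi(x)\,x+(1-\xi(x))\,V_\eta(x),\qquad V_\eta(x):=K_l(x,\0\otimes\eta).
\end{equation*}
Thus $t\mapsto K_l(x,\xi\otimes\eta)$ is affine in $t=\xi(x)$, the value depends on $\xi$ only through $\xi(x)$, and the optimal choices decouple across states. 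Accordingly I define the best-response correspondence
\begin{equation*}
\Phi(\eta):=\Big\{\xi\in[0,1]^{\mathbb{X}}:\ \xi(x)\in\operatorname*{arg\,max}_{t\in[0,1]}\big(tx+(1-t)V_\eta(x)\big)\ \text{for all }x\in\mathbb{X}\Big\},
\end{equation*}
and observe that $\eta$ is an equilibrium liquidation strategy exactly when $\eta\in\Phi(\eta)$.

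Next I would check the hypotheses of Kakutani's theorem for $\Phi:[0,1]^{\mathbb{X}}\rightrightarrows[0,1]^{\mathbb{X}}$. The domain is nonempty, compact and convex. For each $\eta$ the set $\Phi(\eta)$ is nonempty, since the affine map $t\mapsto tx+(1-t)V_\eta(x)$ attains its maximum on the compact interval $[0,1]$; and it is convex, because for each $x$ the argmax is a subinterval of $[0,1]$ (namely $\{1\}$, $\{0\}$ or $[0,1]$ according as $x>V_\eta(x)$, $x<V_\eta(x)$ or $x=V_\eta(x)$), so $\Phi(\eta)$ is a product of intervals. The only substantive point is that $\Phi$ has closed graph: if $\eta_n\to\eta$ and $\xi_n\to\xi$ with $\xi_n\in\Phi(\eta_n)$, then for every $x$ and every $\zeta\in[0,1]^{\mathbb{X}}$ we have $K_l(x,\xi_n\otimes\eta_n)\geq K_l(x,\zeta\otimes\eta_n)$, and letting $n\to\infty$ via Lemma \ref{lemma2} on both sides gives $K_l(x,\xi\otimes\eta)\geq K_l(x,\zeta\otimes\eta)$, i.e. $\xi\in\Phi(\eta)$.

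Kakutani's fixed point theorem then yields $\eta^*\in\Phi(\eta^*)$, which is precisely an equilibrium liquidation strategy. I expect the closed-graph (upper hemicontinuity) verification to be the main obstacle, and it is exactly what Lemma \ref{lemma2} was built to overcome; the affine dependence on $\xi(x)$ — a consequence of the square-root normalization factoring out the remaining proportion $1-\xi(x)$ — is what secures the convex-valuedness that Kakutani demands, and is the reason the linear fixed point schemes that fail for the nonlinear criterion are replaced here by a best-response argument.
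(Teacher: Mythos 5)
Your proposal is correct and follows essentially the same route as the paper's proof: both define the best-response correspondence $\Phi$ on $\mathcal{L}\cong[0,1]^{\mathbb{X}}$, verify that it is nonempty-valued, convex-valued and upper semicontinuous (the latter exactly via Lemma \ref{lemma2}), and conclude with a Kakutani-type fixed point theorem. The only minor difference is that you derive convexity of $\Phi(\eta)$ from the affine dependence of $K_l(x,\xi\otimes\eta)$ on $\xi(x)$ (the identity \eqref{ee8}, which the paper exploits later in Proposition \ref{lemma3}), whereas the paper deduces it from the concavity estimate of Lemma \ref{lemma0}; both arguments are valid.
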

\begin{proof}
For $\eta\in\cL$, define the set valued map
$$\Phi(\eta):=\{\xi^*\in\cL:\ K_l(x,\xi^*\otimes\eta)\geq K_l(x,\xi\otimes\eta),\ \forall\,x\in\mathbb{X}, \forall \, \xi\in \cL\}.$$

First, we have that for any $\eta\in\cL$, $\Phi(\eta)$ is not empty. Indeed, since $K_l(x,\xi\otimes\eta)$ depends on $\xi$ only through $\xi(x)$, we can choose $\xi^*(x)$ to be a maximizer for each $x$ fixed. 

For $\xi_1,\xi_2,\eta\in\cL$ and $\lambda\in(0,1)$, we have that $\theta^{\eta\otimes\eta}=\theta^\eta$ and
\begin{equation*}
\theta^{(\lambda\xi_1+(1-\lambda)\xi_2)\otimes\eta}=\lambda\theta^{\xi_1\otimes\eta}+(1-\lambda)\theta^{\xi_2\otimes\eta}.
\end{equation*}
Moreover, thanks to Lemma \ref{lemma0}, we obtain that $\Phi(\eta)$ is a convex set for any $\eta\in\cL$. In addition, by Lemma \ref{lemma2}, the map $\Phi$ is u.s.c.. That is, for $\eta_n,\xi_n^*,\eta,\xi^*\in\cL$ with $\eta_n\rightarrow\eta$ and $\xi_n^*\rightarrow\xi^*$, if $\xi_n^*\in\Phi(\eta_n)$, then $\xi^*\in\Phi(\eta)$.

Applying \cite[Theorem 1]{MR0047317}, we obtain the desired result.
\end{proof}

\begin{remark}
As we can see in this proof, the assumption that $X$ have finite state space and the limit $X_{\infty}$ exists a.s. is necessary to obtain some key estimations which are hard to achieve when the state space is infinite. Despite this, the concepts introduced in this paper do not rely on this assumption and a future work can focus on extending certain results to the case with infinite state space.
\end{remark}

\subsection{Optimal Equilibrium Liquidation Strategies}

According to consistent planning in Strotz \cite{RePEc:oup:restud:v:23:y:1955:i:3:p:165-180.}, finding equilibria is only the first step and the agent should choose the best one among all equilibria. We then formulate the definition of optimal equilibrium liquidation strategy as the following. 

\begin{defn}\label{om} Let $\mathcal{E}\subset \mathcal{L}$ be the collection of equilibrium liquidation strategies. We say an equilibrium liquidation strategy $\eta^*\in \mathcal{E}$ is optimal if
\[
K_l(x, {\eta^*})\ge K_l(x, {\eta}), \quad  
\forall \, x\in \mathbb{X}, \forall \, \eta\in \mathcal{E}.
\]
\end{defn}

To find an optimal equilibrium liquidation strategy, we need to study the set $\mathcal{E}$. We will provide a characterization of equilibrium liquidation strategies in the following proposition.

\begin{prop} \label{lemma3}
$\eta$ is an equilibrium liquidation strategy if and only if the following holds:

(i) $\eta(x)=0$ for all $x\in \mathbb{X}$ such that  $\mathbb{E}_x[Y_{\eta}]-c({\rm Var}_x[Y_{\eta}])^{1/2}>x$, and 

(ii) $\eta(x)=1$ for all $x\in \mathbb{X}$ such that  $\mathbb{E}_x[Y_{\eta}]-c({\rm Var}_x[Y_{\eta}])^{1/2}<x$,

\noindent where $Y_{\eta}=\theta^{\eta}(X_{\cdot+1}) :=\eta(X_1)X_1+\sum_{k=2}^{\infty}(1-\eta(X_1))\cdots(1-\eta(X_{k-1}))\eta(X_k)X_k
+\prod_{k=1}^{\infty}(1-\eta(X_k))X_{\infty}$.
\end{prop}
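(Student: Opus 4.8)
The plan is to fix an arbitrary $\eta\in\mathcal{L}$ and a state $x\in\mathbb{X}$, and to compute $K_l(x,\xi\otimes\eta)$ explicitly as a function of the single number $\xi(x)$. The key observation is that, conditionally on $X_0=x$, the perturbed payoff splits as
$$\theta^{\xi\otimes\eta}(X)=\xi(x)\,x+(1-\xi(x))\,Y_\eta,$$
where $Y_\eta=\theta^{\eta}(X_{\cdot+1})$ is exactly the continuation payoff appearing in the statement and, importantly, does not depend on $\xi$; the term $\xi(x)\,x$ is deterministic given $X_0=x$. This decomposition is the one already used implicitly in Lemma \ref{lemma2}.

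From this decomposition I would read off the mean and variance. Since $\xi(x)\,x$ is a constant under $\mathbb{E}_x$, we get $\mathbb{E}_x[\theta^{\xi\otimes\eta}(X)]=\xi(x)\,x+(1-\xi(x))\mathbb{E}_x[Y_\eta]$ and, crucially, ${\rm Var}_x[\theta^{\xi\otimes\eta}(X)]=(1-\xi(x))^2\,{\rm Var}_x[Y_\eta]$, because adding a constant and scaling by $1-\xi(x)\ge 0$ only rescales the standard deviation. Writing $v(x):=\mathbb{E}_x[Y_\eta]-c({\rm Var}_x[Y_\eta])^{1/2}$ and using $1-\xi(x)\ge0$ to pull the factor out of the square root, this yields
$$K_l(x,\xi\otimes\eta)=\xi(x)\,x+(1-\xi(x))\,v(x)=v(x)+\xi(x)\bigl(x-v(x)\bigr).$$
Thus $K_l(x,\xi\otimes\eta)$ is \emph{affine} in $\xi(x)$, with slope $x-v(x)$. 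This affine, scaling-invariant structure is precisely the feature highlighted in Remark \ref{rmk1}, and it is the computational heart of the argument.

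Finally I would translate the equilibrium condition of Definition \ref{eql}. Since $K_l(x,\xi\otimes\eta)$ depends on $\xi$ only through $\xi(x)$, the requirement that $K_l(x,\xi\otimes\eta)\le K_l(x,\eta\otimes\eta)$ hold for every $\xi$ and every $x$ decouples over states and becomes: for each $x$, the value $\eta(x)$ maximizes $a\mapsto v(x)+a\,(x-v(x))$ over $a\in[0,1]$, where I use $\theta^{\eta\otimes\eta}=\theta^{\eta}$ so that the right-hand side is this same affine map evaluated at $a=\eta(x)$. For an affine function on $[0,1]$ the maximizer is forced by the sign of the slope: if $x-v(x)<0$, i.e. $v(x)>x$, the unique maximizer is $a=0$, giving (i); if $x-v(x)>0$, i.e. $v(x)<x$, the unique maximizer is $a=1$, giving (ii); and if $x=v(x)$ every $a$ is optimal, which is exactly why no constraint on $\eta(x)$ is imposed. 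Running this equivalence in both directions proves the characterization. I do not expect a serious obstacle: the only genuine content is the variance identity ${\rm Var}_x[\theta^{\xi\otimes\eta}(X)]=(1-\xi(x))^2{\rm Var}_x[Y_\eta]$ together with the resulting affine structure, after which the reduction to a one-variable affine optimization makes both implications immediate.
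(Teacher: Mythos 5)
Your proposal is correct and follows essentially the same route as the paper's own proof: the decomposition $\theta^{\xi\otimes\eta}(X)=\xi(x)x+(1-\xi(x))Y_\eta$, the resulting affine dependence of $K_l(x,\xi\otimes\eta)$ on $\xi(x)$ (the paper's equation \eqref{ee8}), and the sign-of-slope analysis over $[0,1]$ are exactly the paper's argument. Your write-up is if anything slightly more explicit about the converse direction and the variance identity, but there is no substantive difference.
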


\begin{remark}
The term $\mathbb{E}_x[Y_{\eta}]-c({\rm Var}_x[Y_{\eta}])^{1/2}$ is interpreted as the continuation value, i.e., the value we expect to get if we choose not to stop at the current state $x$. Then (i) and (ii) tell us whether we should liquidate the whole unit or not liquidate at all by comparing the current value $x$ and the continuation value. This is in fact in the same vein as \eqref{ee7}.
\end{remark}

\begin{proof}[Proof of Proposition \ref{lemma3}]

By (\ref{thetaX}), we have \begin{align*}
\theta^{\eta}(X)=&\eta(X_0)X_0+(1-\eta(X_0))Y_{\eta},\\
\theta^{\xi\otimes\eta}(X)=&\xi(X_0)X_0+(1-\xi(X_0))Y_{\eta},
\end{align*}
and
\begin{align}
\notag K_l(x, {\xi\otimes \eta})&=x\xi(x)+(1-\xi(x))\mathbb{E}_x[Y_{\eta}]-c(1-\xi(x))({\rm Var}_x[Y_{\eta}])^{1/2}\\
\label{ee8}&=(x-(\mathbb{E}_x[Y_{\eta}]-c({\rm Var}_x[Y_{\eta}])^{1/2}))\xi(x)+\mathbb{E}_x[Y_{\eta}]-c({\rm Var}_x[Y_{\eta}])^{1/2},
\end{align}
which implies that when $\eta$ is fixed, $K_l(x, {\xi\otimes\eta}) $ is a linear function of $\xi(x)$. 

If $\eta$ is an equilibrium liquidation strategy, then according to Definition \ref{eql}, \begin{equation*}
K_l(x, \eta)=\sup_{\xi\in \cL} K_l(x, {\xi\otimes \eta}).
\end{equation*}

Therefore, if $x-(\mathbb{E}_x[Y_{\eta}]-c({\rm Var}_x[Y_{\eta}])^{1/2})>0$, then $\eta(x)=1$. If $x-(\mathbb{E}_x[Y_{\eta}]-c({\rm Var}_x[Y_{\eta}])^{1/2})<0$, then $\eta(x)=0$. $\eta(x)\in (0,1)$ only when $\eta(x)$ is a solution to the equation $x-(\mathbb{E}_x[Y_{\eta}]-c({\rm Var}_x[Y_{\eta}])^{1/2})=0$.
\end{proof}

\begin{cor}If there exists an equilibrium stopping time $\tau$ with stopping region $S$, then $\eta(x)=\mathbf{1}_{S}(x), x\in \mathbb{X}$ is an equilibrium liquidation strategy.
\end{cor}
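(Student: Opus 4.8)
The plan is to invoke the characterization of equilibrium liquidation strategies furnished by Proposition \ref{lemma3}, so that the task reduces to verifying conditions (i) and (ii) there for the particular strategy $\eta = \mathbf{1}_S$. The central observation is that for this choice of $\eta$ the continuation payoff $Y_\eta$ collapses to the hitting payoff of the pure stopping time $\rho(x,S)$, which lets us rewrite the continuation value in terms of $K_p$ and then read off the desired inequalities directly from \eqref{ee7}.

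First I would establish the identity $Y_\eta = X_{\rho(x,S)}$ under $\mathbb{P}_x$. With $\eta(X_k) = \mathbf{1}_S(X_k)$, the weight $(1-\eta(X_1))\cdots(1-\eta(X_{k-1}))\eta(X_k)$ equals $1$ exactly on the event $\{X_1,\ldots,X_{k-1}\notin S,\ X_k\in S\} = \{\rho(x,S)=k\}$ and vanishes otherwise, while the tail term $\prod_{k=1}^\infty(1-\eta(X_k))X_\infty$ accounts for the path that never re-enters $S$, on which $\rho(x,S)=\infty$ and the payoff is $X_\infty$. Consequently,
$$\mathbb{E}_x[Y_\eta]-c({\rm Var}_x[Y_\eta])^{1/2}=\mathbb{E}_x[X_{\rho(x,S)}]-c({\rm Var}_x[X_{\rho(x,S)}])^{1/2}=K_p(x,\rho(x,S)).$$

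Next I would translate the equilibrium-stopping-time inequalities \eqref{ee7} into the language of Proposition \ref{lemma3}. Suppose the continuation value $K_p(x,\rho(x,S))$ is strictly larger than $x$ at some state. Then $x$ cannot lie in $S$, since every state of $S$ satisfies $K_p(x,\rho(x,S))\le x$ by \eqref{ee7}; hence $\eta(x)=\mathbf{1}_S(x)=0$, which is exactly condition (i). Symmetrically, if the continuation value is strictly less than $x$, then \eqref{ee7} forbids $x\notin S$, so $x\in S$ and $\eta(x)=1$, which is condition (ii). Proposition \ref{lemma3} then yields that $\eta=\mathbf{1}_S$ is an equilibrium liquidation strategy.

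The only delicate point, and the step I would write out with the most care, is the identification $Y_\eta = X_{\rho(x,S)}$, in particular the bookkeeping of the $X_\infty$ term: here one uses the standing convention (the Remark following Definition \ref{def_psp}) that $S$ contains all absorbing states, together with the almost sure existence of $X_\infty$, to ensure the never-re-entering-$S$ contribution is matched correctly to the $\rho(x,S)=\infty$ payoff. Beyond this the argument is a direct bookkeeping exercise followed by a one-line logical translation of the equilibrium inequalities, so no genuine analytic difficulty is expected.
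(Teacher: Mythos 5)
Your proof is correct and follows exactly the route the paper intends: the paper states this corollary without proof as an immediate consequence of Proposition \ref{lemma3}, and your argument---identifying $Y_\eta=X_{\rho(x,S)}$ for $\eta=\mathbf{1}_S$ and then reading conditions (i) and (ii) off the contrapositives of \eqref{ee7}---is precisely the omitted verification.
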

%
%

By Proposition \ref{lemma3}, we can find an equilibrium liquidation strategy by solving a system of equations $\mathbb{E}_x[Y_{\eta}]-c({\rm Var}_x[Y_{\eta}])^{1/2}=x, \forall x\in \mathbb{X}$. The solution $\{\eta(x)\in [0,1]: x\in \mathbb{X}\}$ must be an equilibrium liquidation strategy if it exists. Other candidates of equilibrium liquidation strategies can be found by checking conditions (i) and (ii) in Proposition \ref{lemma3} when $\mathbb{E}_x[Y_{\eta}]-c({\rm Var}_x[Y_{\eta}])^{1/2}=x$ does not hold for some $x\in \mathbb{X}$. Here are some examples of mean-standard deviation problems with different sets of equilibrium liquidation strategies $\mathcal{E}.$ 

\begin{Example}\label{eg012} In this example, a unique equilibrium liquidation strategy exists, which is also an equilibrium stopping time.

Let $c=1/4$. $X$ has state space $\mathbb{X}=\{0,1, 2\}$ and the following transition matrix.\[
\begin{matrix}
 &\mathbf{0} & \mathbf{1}  & \mathbf{2} \\
\mathbf{0}& 1 & 0 & 0 \\
\mathbf{1}& 0.2 & 0.4 & 0.4\\
\mathbf{2}& 0 & 0 & 1 \\
  \end{matrix}
\]

Note that $\{0, 2\}$ are absorbing states. For any pure stopping time, its stopping region must contain the absorbing states $\{0, 2\}$. Likewise, any liquidation strategy must satisfy $\eta(0)=\eta(2)=1$.

One can easily check that the pure stopping time with stopping region $S=\{0, 2\}$ is an equilibrium stopping time with \[
\mathbb{E}_1[X_{\rho(1, S)}]-c({\rm Var}_x[X_{\rho(1, S)}])^{1/2}=\frac{4}{3}-\frac{\sqrt{2}}{6}>1.
\]

Moreover, this is the only equilibrium stopping time. If $S=\{0,1,2\}$, then \[
\mathbb{E}_1[X_{\rho(1, S)}]-c({\rm Var}_x[X_{\rho(1, S)}])^{1/2}=\frac{6}{5}-\frac{\sqrt{14}}{20}>1,
\]
which means the pure stopping time with stopping region $\{0,1,2\}$ is not an equilibrium stopping time.

Now we want to find all equilibrium liquidation strategies for this problem. The only parameter remains to be determined is $a:=\eta(1)$.

By analyzing the behavior of this Markov chain we have
\begin{align*}
&\mathbb{P}_1(X_1=0=X_n, n\ge 1)=0.2,\\
&\mathbb{P}_1(X_1=2=X_n, n\ge 1)=0.4,\\
&\mathbb{P}_1(X_1=1, X_2=0=X_n, n\ge 2)=0.4\cdot 0.2,\\
&\mathbb{P}_1(X_1=1, X_2=2=X_n, n\ge 2)=0.4\cdot 0.4,\\
&\cdots\\
&\mathbb{P}_1(X_k=1, k=1,2,\cdots, m,  X_n=0, n> m)=0.4^m\cdot 0.2,\\
&\mathbb{P}_1(X_k=1, k=1,2,\cdots, m,  X_n=2, n> m)=0.4^m\cdot 0.4,
\end{align*}
and 
\begin{align*}
&X_k=\begin{cases}1,  &k=1,2,\cdots, m,\\
0, &k\ge m+1,\\
\end{cases} \quad\quad \Rightarrow  Y_{\eta}=\eta(1)(\sum_{i=0}^{m-1}(1-\eta(1))^i),\\
&X_k=\begin{cases}1, &k=1,2,\cdots, m,\\
2, &n\ge m+1,
\end{cases} \quad\quad \Rightarrow Y_{\eta}=\eta(1)(\sum_{i=0}^{m-1}(1-\eta(1))^i)+2(1-\eta(1))^m.
\end{align*}
In conclusion, the random variable $Y_{\eta}$ has the following distribution \begin{align*}
&\mathbb{P}_1(Y_{\eta}=1-(1-\eta(1))^n)=0.4^n\cdot 0.2, \,\,\, n=0, 1, 2, \cdots,\\
&\mathbb{P}_1(Y_{\eta}=1+(1-\eta(1))^n)=0.4^n\cdot 0.4, \,\,\, n=0, 1, 2, \cdots.
\end{align*}
It can be derived that  \begin{align*}
&\mathbb{E}_1[Y_{\eta}]=\frac{0.8+0.4a}{0.6+0.4a},\\
&{\rm Var}_1[Y_{\eta}]=\frac{0.112a^2+0.256a+0.192}{(1-0.4(1-a)^2)(0.6+0.4a)^2}.
\end{align*}
Let $h(a):=\mathbb{E}_1[Y_{\eta}]-\frac{1}{4}({\rm Var}_1[Y_{\eta}])^{1/2}$. By computation we have $h(a)>1$ for all $a\in [0,1]$. So there is only one equilibrium liquidation strategy, $\eta(1)=0$, which also coincides with the unique equilibrium stopping time mentioned above.
\end{Example}
\begin{Example} \label{eg013610'} In this example, a unique equilibrium liquidation strategy exists, which is not a pure stopping time. 

Consider the example in the proof of Proposition \ref{eg013610}.
Since $\{0, 3,10\}$ are absorbing states, we have $\eta(0)=\eta(3)=\eta(10)=1$. The only parameters remain to be determined are $a:=\eta(1)$ and $b:=\eta(6)$. By the proof of Proposition \ref{eg013610}. we know that there is no equilibrium stopping time in this example. However we will see that it does have an equilibrium liquidation strategy. 

Let $g_i(a,b):=\mathbb{E}_i[Y_{\eta}]-(\mathbb{E}_i[Y_{\eta}^2]-\mathbb{E}_i[Y_{\eta}]^2)^{1/2}$ for $i=1,6$. They have explicit expressions as shown in Appendix \ref{A1}. We obtain the graph of sets $\{(a,b)\in [0,1]\times[0,1]: g_{1}(a,b)=1\}$ and $\{(a,b)\in [0,1]\times[0,1]: g_{6}(a,b)=6\}$ as following.
\begin{figure}
\centering
\includegraphics[scale=0.15]{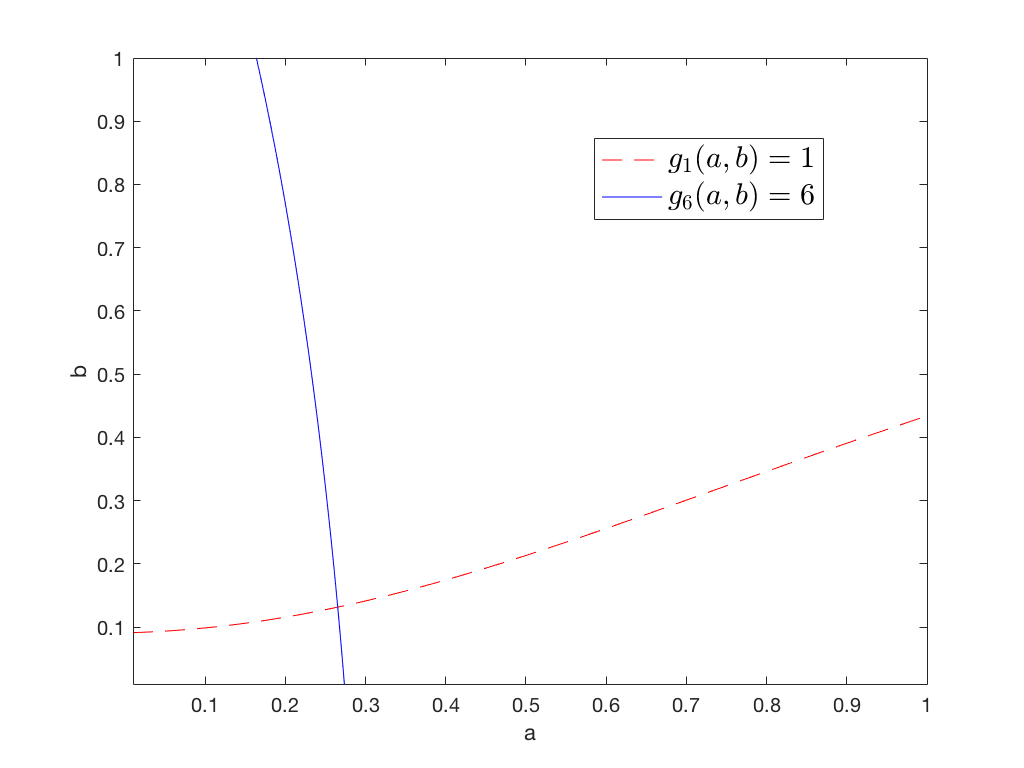}
\caption{Graph for Example~\ref{eg013610'}}
\label{fig:ffig}
\end{figure}

From Figure~\ref{fig:ffig} we observe that there exists a unique intersection of the curve $g_1(a,b)=1$  and $g_6(a,b)=6$, denoted by $(a_0, b_0)$. Then $\eta(1)=a_0, \eta(6)=b_0$ is an equilibrium liquidation strategy which is not a pure stopping time. There could be other equilibrium liquidation strategies in the following cases.

(i) If $g_1(a,0)=1$ and $g_6(a,0)>6$, then $\eta(1)=a,\eta(6)=0$ is an equilibrium liquidation strategy;

(ii) If $g_1(a,1)=1$ and $g_6(a,1)<6$, then $\eta(1)=a,\eta(6)=1$ is an equilibrium liquidation strategy;

(iii) If $g_6(0,b)=6$ and $g_1(0,b)>1$, then $\eta(1)=0,\eta(6)=b$ is an equilibrium liquidation strategy;

(iv) If $g_6(1,b)=6$ and $g_1(1,b)<1$, then $\eta(1)=1,\eta(6)=b$ is an equilibrium liquidation strategy.

However, from the above graph, we conclude that there are no solutions for $g_1(a,0)=1$, $g_1(a,1)=1$, $g_6(0,b)=6$ or $g_6(1,b)=6$. So there is only one equilibrium liquidation strategy in this example.
\end{Example}

The above two examples illustrate that an equilibrium liquidation strategy exists regardless of the existence of equilibrium stopping times. Since the equilibrium liquidation strategies are unique in Example \ref{eg012} and Example \ref{eg013610'}, they are also optimal. However, uniqueness of an equilibrium liquidation strategy and existence of an optimal equilibrium liquidation strategy are not guaranteed in general as we will show later. 

From Proposition \ref{lemma3} and \eqref{ee8}, we have that
\begin{equation}\label{ee9}
K_l(x, {\eta})=\max\{x, \,\, \E_x[Y_{\eta}]-c\V_x[Y_{\eta}]\}=
\begin{cases}
x,  &0< \eta(x)\le 1,\\
 \E_x[Y_{\eta}]-c\V_x[Y_{\eta}],  &0\leq\eta(x)<1.
\end{cases}
\end{equation}
A necessary condition for $\eta^*\in \mathcal{E}$ to be an optimal equilibrium liquidation strategy is given in the following proposition.

\begin{prop} If an equilibrium liquidation strategy $\eta^*\in \mathcal{E}$ is optimal then
\[
\mathcal{C}(\eta^*)=\bigcup_{\eta\in \mathcal{E}}\mathcal{C}(\eta),
\]
where $\mathcal{C}(\eta):=\{x\in \mathbb{X}: x<\E_x[Y_{\eta}]-c(\V_x[Y_{\eta}])^{1/2}\}$. 
\end{prop}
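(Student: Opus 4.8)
The plan is to reduce the asserted set identity to a pointwise comparison between $K_l(x,\eta)$ and the current value $x$. The key observation, read off directly from \eqref{ee9}, is the equivalence
\[
K_l(x,\eta)>x \quad\Longleftrightarrow\quad x\in\mathcal{C}(\eta),
\qquad\text{and}\qquad K_l(x,\eta)=x \ \text{ otherwise},
\]
valid for every $\eta\in\mathcal{E}$ and every $x\in\mathbb{X}$. Indeed, $\mathcal{C}(\eta)$ is by definition the set of states at which the continuation value $\E_x[Y_\eta]-c(\V_x[Y_\eta])^{1/2}$ strictly exceeds $x$; on that set the maximum in \eqref{ee9} is attained by the continuation value, so $K_l(x,\eta)>x$, while off that set the continuation value is at most $x$ and the maximum equals $x$. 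Establishing this equivalence is the entire substance of the argument.

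With it in hand, the inclusion $\mathcal{C}(\eta^*)\subseteq\bigcup_{\eta\in\mathcal{E}}\mathcal{C}(\eta)$ is immediate, since $\eta^*\in\mathcal{E}$. For the reverse inclusion I would fix $x\in\bigcup_{\eta\in\mathcal{E}}\mathcal{C}(\eta)$, so that $x\in\mathcal{C}(\eta)$ for some $\eta\in\mathcal{E}$; by the equivalence this gives $K_l(x,\eta)>x$. Optimality of $\eta^*$ (Definition \ref{om}) then yields $K_l(x,\eta^*)\ge K_l(x,\eta)>x$, and applying the equivalence a second time, now to $\eta^*$, gives $x\in\mathcal{C}(\eta^*)$. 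Combining the two inclusions proves $\mathcal{C}(\eta^*)=\bigcup_{\eta\in\mathcal{E}}\mathcal{C}(\eta)$.

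I do not expect any genuine obstacle here: once the characterization $\{x:K_l(x,\eta)>x\}=\mathcal{C}(\eta)$ is in place, the proof is a two-line consequence of \eqref{ee9} and the pointwise-dominance definition of optimality. The only point deserving a moment's care is to read \eqref{ee9} with the square root retained on the variance term, so that the cutoff defining $\mathcal{C}(\eta)$ coincides exactly with the branch selected by the maximum; once that is confirmed the equivalence holds verbatim and the remainder is formal.
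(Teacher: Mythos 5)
Your proposal is correct and follows essentially the same route as the paper: both arguments rest on the characterization \eqref{ee9} (with the square root on the variance term, which is indeed a typo in the display) to equate $x\in\mathcal{C}(\eta)$ with $K_l(x,\eta)>x$ and $x\notin\mathcal{C}(\eta)$ with $K_l(x,\eta)=x$, and then invoke pointwise optimality of $\eta^*$. The paper phrases this as a contradiction argument (a point $x\in\mathcal{C}(\eta)\setminus\mathcal{C}(\eta^*)$ would force $K_l(x,\eta^*)=x<K_l(x,\eta)$), which is just the contrapositive of your direct two-inclusion argument.
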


\begin{proof}
%
%
%
%
%
%
Suppose $\mathcal{C}(\eta^*)=\bigcup_{\eta\in \mathcal{E}}\mathcal{C}(\eta)$ does not hold. Then there exists some $\eta\in \mathcal{E}$ and some $x\in \mathbb{X}$ such that $x\in \mathcal{C}(\eta)$ and $x\notin \mathcal{C}(\eta^*)$. Then by \eqref{ee9} we have that
$$K_l(x, \eta^*)=x<\mathbb{E}_x[Y_{\eta}]-C({\rm Var}_x[Y_{\eta}])^{1/2}=K_l(x, {\eta}),$$
which contradicts that $\eta^*$ is optimal. 
\end{proof}

\begin{cor} \label{cor2} If $\bigcup_{\eta\in \mathcal{E}}\mathcal{C}(\eta)=\emptyset$, then any $\eta \in \mathcal{E}$ is an optimal equilibrium liquidation strategy.
\end{cor}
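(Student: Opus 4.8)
The plan is to observe that the hypothesis forces $K_l(x,\eta)$ to be independent of the choice of $\eta\in\mathcal{E}$, after which optimality is immediate. The engine is the identity \eqref{ee9}, which expresses the equilibrium value as $K_l(x,\eta)=\max\{x,\ \E_x[Y_\eta]-c(\V_x[Y_\eta])^{1/2}\}$.

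First I would unpack the hypothesis. Since each $\mathcal{C}(\eta)$ is contained in $\bigcup_{\eta\in\mathcal{E}}\mathcal{C}(\eta)$, the assumption that this union is empty forces $\mathcal{C}(\eta)=\emptyset$ for \emph{every} $\eta\in\mathcal{E}$. By the definition of $\mathcal{C}(\eta)$, this says precisely that
\[
x\ge \E_x[Y_\eta]-c(\V_x[Y_\eta])^{1/2},\qquad \forall\,x\in\mathbb{X},\ \forall\,\eta\in\mathcal{E}.
\]

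Next I would feed this inequality into \eqref{ee9}. Under it, the maximum defining $K_l(x,\eta)$ is always attained by its first argument, so $K_l(x,\eta)=x$ for every $x\in\mathbb{X}$ and every $\eta\in\mathcal{E}$. In particular the value does not depend on $\eta$ at all: across the whole equilibrium class it equals the stopping value $x$.

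Finally, to verify optimality of an arbitrary $\eta^*\in\mathcal{E}$, I would simply compare values: for any competing $\eta\in\mathcal{E}$ and any $x\in\mathbb{X}$ we have $K_l(x,\eta^*)=x=K_l(x,\eta)$, so the defining inequality $K_l(x,\eta^*)\ge K_l(x,\eta)$ of Definition~\ref{om} holds (with equality). Hence every $\eta^*\in\mathcal{E}$ is optimal. I do not expect any genuine obstacle here: the entire content lies in the first step, namely recognizing that an empty union collapses every continuation value below the current state and thereby reduces $K_l(\cdot,\eta)$ to the identity map for all equilibria simultaneously. The corollary is thus essentially a direct reading of \eqref{ee9}.
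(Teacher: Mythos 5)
Your proof is correct and follows essentially the same route as the paper's: the empty union forces $\mathcal{C}(\eta)=\emptyset$ for every $\eta\in\mathcal{E}$, so by \eqref{ee9} the value $K_l(x,\eta)$ collapses to $x$ for all $x\in\mathbb{X}$ and all equilibria, making every equilibrium trivially optimal. The paper states this in one line; you have merely spelled out the same argument in more detail.
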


\begin{proof} If $\bigcup_{\eta\in \mathcal{E}}\mathcal{C}(\eta)=\emptyset$, then for all $\eta\in \mathcal{E}$ and all $x\in \mathbb{X}$, we have $K_l(x, {\eta})=x$. By definition, they are all optimal.
\end{proof}

The next proposition shows that the existence and uniqueness of an optimal equilibrium liquidation strategy are not guaranteed.

\begin{prop} \label{examples}The optimal equilibrium liquidation strategy may not exist. When it does exist, it may not be unique.
\end{prop}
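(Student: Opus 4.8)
The plan is to prove both assertions by exhibiting explicit finite-state Markov chains, in the same computational style as Examples \ref{eg012} and \ref{eg013610'}, and to analyze their equilibrium sets $\mathcal{E}$ entirely through the characterization in Proposition \ref{lemma3}. For each candidate chain I would first solve the equilibrium conditions (i)--(ii) to list all $\eta\in\mathcal{E}$, then evaluate the values $K_l(x,\eta)$ via \eqref{ee9}, and finally compare them across $\mathcal{E}$ in the pointwise-dominance order of Definition \ref{om}. Since every equilibrium gives $K_l(x,\eta)=x$ at any state outside $\mathcal{C}(\eta)$, the only way equilibria can differ in value is on the sets $\mathcal{C}(\eta)$, so both phenomena must be engineered there.

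For the \emph{non-existence} claim, I would build a chain with (at least) two transient states $x_1,x_2$ whose continuation values are coupled, so that the continuation value at $x_1$ depends on the liquidation rate $\eta(x_2)$ and conversely. Solving the system $\mathbb{E}_x[Y_\eta]-c({\rm Var}_x[Y_\eta])^{1/2}=x$ for $x\in\{x_1,x_2\}$, together with the boundary assignments $\eta(x_i)\in\{0,1\}$, should produce two distinct equilibria $\eta_1,\eta_2$ whose value profiles \emph{cross}: $K_l(x_1,\eta_1)>K_l(x_1,\eta_2)$ while $K_l(x_2,\eta_2)>K_l(x_2,\eta_1)$, with no third equilibrium dominating at both states. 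Because no single $\eta^*\in\mathcal{E}$ can then satisfy $K_l(\cdot,\eta^*)\ge K_l(\cdot,\eta)$ simultaneously at $x_1$ and $x_2$, Definition \ref{om} cannot be met and no optimal equilibrium exists. Equivalently, the crossing forces $\mathcal{C}(\eta_1)$ and $\mathcal{C}(\eta_2)$ to be incomparable, so the necessary condition of the preceding proposition (that an optimal $\eta^*$ must satisfy $\mathcal{C}(\eta^*)=\bigcup_{\eta\in\mathcal{E}}\mathcal{C}(\eta)$) already rules out optimality once one checks that no equilibrium attains that union.

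For the \emph{non-uniqueness} claim, the cleanest route is through Corollary \ref{cor2}: I would construct a chain in which every equilibrium $\eta$ has $\mathcal{C}(\eta)=\emptyset$, so that $K_l(x,\eta)=x$ for all $x$ and hence \emph{every} equilibrium is automatically optimal, while at the same time the equilibrium equation $\mathbb{E}_x[Y_\eta]-c({\rm Var}_x[Y_\eta])^{1/2}=x$ admits more than one admissible root $\eta(x)\in[0,1]$ at some transient state $x$. Each such root passing conditions (i)--(ii) of Proposition \ref{lemma3} yields a distinct equilibrium, and since all of them have empty $\mathcal{C}$ they are all optimal, establishing non-uniqueness. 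Concretely one wants the defining rational/square-root equation in $\eta(x)$ to have (at least) two roots in $[0,1]$ while the continuation value never strictly exceeds the current state.

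The main obstacle is the computational bookkeeping rather than any conceptual difficulty. For a candidate chain one must express $\mathbb{E}_x[Y_\eta]$ and ${\rm Var}_x[Y_\eta]$ as functions of the free parameters $\eta(x)$ by summing the geometric-type path series in \eqref{thetaX} (exactly the computation carried out in Example \ref{eg012}), and then solve and \emph{verify} the resulting nonlinear equilibrium equations, checking the inequalities (i)--(ii) of Proposition \ref{lemma3} and not merely the equality. The delicate step is choosing the transition probabilities and the constant $c$ so that the value profiles genuinely cross (for non-existence) or so that the equilibrium equation degenerates to several admissible roots with $\mathcal{C}(\eta)=\emptyset$ (for non-uniqueness); once suitable numbers are fixed, the verification reduces to finite arithmetic, with any unwieldy closed forms relegated to the appendix as in Example \ref{eg013610'}.
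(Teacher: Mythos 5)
Your blueprint identifies exactly the two mechanisms the paper itself uses: for non-existence, a chain with two coupled transient states whose equilibrium value profiles cross (so no $\eta^*\in\mathcal{E}$ can dominate pointwise as Definition \ref{om} requires), and for non-uniqueness, a chain in which every equilibrium $\eta$ has $\mathcal{C}(\eta)=\emptyset$, so that by Corollary \ref{cor2} all equilibria are optimal while $\mathcal{E}$ contains more than one element. The paper realizes the first with a four-state chain on $\{0,11,17,18\}$ with $c=0.1$ (five equilibria, where $(0,1)$ wins at state $11$ but $(a_3,0)$ wins at state $17$), and the second with a three-state chain on $\{0,1,4\}$ with $c=0.5$ (equilibria $\eta(1)=0$ and $\eta(1)=1$, both with continuation value not exceeding the current state, hence both optimal).

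The genuine gap is that your proposal never produces the counterexamples: no state space, no transition matrix, no constant $c$, and no verification. The proposition is purely existential --- its entire mathematical content is the exhibition of chains with the claimed properties --- so a proof that says suitable transition probabilities and $c$ "should" produce crossing value profiles, or "should" make every equilibrium satisfy $\mathcal{C}(\eta)=\emptyset$ alongside multiplicity, is asserting precisely what must be demonstrated. That this is not a routine afterthought is visible in the paper: the verification requires summing the geometric path series for $\E_x[Y_\eta]$ and $\V_x[Y_\eta]$ as explicit functions of $(\eta(x))_{x \text{ transient}}$, locating all solutions of the system from Proposition \ref{lemma3} including boundary cases, confirming via \eqref{ee9} that the value profiles genuinely cross and that no further equilibrium dominates both states, and these computations occupy a substantial appendix. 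A subtler point your plan glosses over: for the non-uniqueness example one must check that \emph{every} equilibrium (including the boundary ones with $\eta(x)\in\{0,1\}$, not only interior roots of the indifference equation) has empty $\mathcal{C}(\eta)$, since Corollary \ref{cor2} needs $\bigcup_{\eta\in\mathcal{E}}\mathcal{C}(\eta)=\emptyset$; in the paper's example one equilibrium arises from the equality $\E_1[Y_\eta]-c(\V_1[Y_\eta])^{1/2}=1$ at $\eta(1)=0$ and the other from the strict inequality direction at $\eta(1)=1$, and both must be inspected. Until concrete numbers are fixed and these checks carried out, the proposition remains unproved.
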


\begin{proof} (i) We will give an example in which
there exist multiple equilibrium liquidation strategies and one of them is optimal.

Let $c=0.4$. $X$ has state space $\mathbb{X}=\{0,1, 2, 7, 9 \}$ and the following transition matrix.\[
\begin{matrix}
 &\mathbf{0} & \mathbf{1}  & \mathbf{2} & \mathbf{7}  & \mathbf{9} \\
\mathbf{0}& 1 & 0 & 0 & 0 & 0 \\
\mathbf{1}& 0.2 & 0 & 0.4 & 0.2 & 0.2 \\
\mathbf{2}& 0 & 0 & 1 & 0 & 0 \\
\mathbf{7}& 0 & 0.2 & 0 & 0 & 0.8 \\
\mathbf{9}& 0 & 0 & 0 & 0 & 1 \\
  \end{matrix}
\]

Since $\{0, 2, 9\}$ are absorbing states, we have $\eta(0)=\eta(2)=\eta(9)=1$ for all $\eta\in \mathcal{L}$.  Let $a=\eta(1)$ and $b=\eta(7)$. Let $g_i(a,b):=\mathbb{E}_i[Y_{\eta}]-(\mathbb{E}_i[Y_{\eta}^2]-\mathbb{E}_i[Y_{\eta}]^2)^{1/2}$ for $i=1, 7$. By analysis shown in Appendix \ref{A2}, we obtain the following results. 

(1) $g_1(a,b)>1$ for all $(a,b)\in [0,1]\times[0,1]$. 

(2) There exists a unique $b_0\in (0,1)$ such that $g_7(0,b_0)=7$.

(3) $g_7(0,0)>7$ and $g_7(0,1)<7$.

So there are three equilibrium liquidation strategies in total: $(a,b)=(0,0), (a,b)=(0,b_0)$ and $(a,b)=(0,1)$.  By graphs of $g_1(0,b)$ and $g_7(0,b)$ in Appendix \ref{A2}, we observe that $g_1(0,0)>g_1(0, b_0)>g_1(0,1)$ and $g_7(0,0)>g_7(0,b_0)>g_7(0,1)$, which implies that $K(x, (0,0))=\max_{\eta \in \mathcal{E}}K(x, {\eta})$ for all $x\in \mathbb{X}$. By Definition \ref{om} $(a,b)=(0,0)$ is the optimal equilibrium liquidation strategy. 

(ii) We will give an example in which there exist multiple equilibrium liquidation strategies but none of them are optimal.

Let $c=0.1$. $X$ has state space $\mathbb{X}=\{0,11, 17, 18 \}$ and the following transition matrix.\[
\begin{matrix}
 &\mathbf{0} & \mathbf{11}  & \mathbf{17} & \mathbf{18} \\
\mathbf{0}& 1 & 0 & 0 & 0 \\
\mathbf{11}& 0.1 & 0.7 & 0 & 0.2 \\
\mathbf{17}& 0 & 0.1 & 0.1 & 0.8 \\
\mathbf{18}& 0 & 0 & 0 & 1 \\
  \end{matrix}
\]

Since $\{0, 18\}$ are absorbing states, we have $\eta(0)=\eta(18)=1$ for all $\eta\in \mathcal{L}$. 
Let $a=\eta(11)$ and $b=\eta(17)$. Let $g_i(a,b):=\mathbb{E}_i[Y_{\eta}]-(\mathbb{E}_i[Y_{\eta}^2]-\mathbb{E}_i[Y_{\eta}]^2)^{1/2}$ for $i=11, 17$. By analysis shown in Appendix \ref{A3}, we obtain the following results.

(1) There is not intersection of the curve $g_{11}(a,b)=11$ and $g_{17}(a,b)=17$.

(2) There exist $0<a_1<a_2<a_3<a_4<1$ and $0<b_0<1$ such that $ g_{17}(a_1,0)=17, g_{17}(a_2, 1)=17, g_{17}(a_4,1) =17, g_{11}(a_3, 0)=g_{11}(a_3, 1)=11$, and $ g_{17}(1, b_0)=17.$

(3) $g_{11}(a_1,0)\ne 11, g_{11}(a_2, 1)\ne 11$ and $g_{11}(a_4, 1)\ne 11$.

(4) $g_{17}(a_3, 0)>17, g_{17}(a_3, 1)>17$ and $g_{11}(1,b_0)<11$.

So there are five equilibrium liquidation strategies in total. The following table summarises the values of objective functions under these equilibrium liquidation strategies.

\begin{center}
    \begin{tabular}{| l | l | l | l |}
    \hline
    $\eta: (a,b)$ & $K_l(11, \eta)$ & $K_l(17, \eta)$ \\ \hline
    $(1,1)$ & 11 & 17 \\ \hline
    $(0,1)$ & 11.1515 & 17  \\ \hline
    $(1,0)$ & 11 & 17.0022  \\ \hline
    $(a_3,0)$ & 11 & $\approx$ 17.0212 \\ \hline
    $(1, b_0)$ & 11 & 17\\ 
    \hline
    \end{tabular}
\end{center}

The table shows that there is no optimal equilibrium liquidation strategy.

(iii) We will give an example in which there are two equilibrium liquidation strategies and both are optimal.

Let $c=0.5$. $X$ has state space $\mathbb{X}=\{0,1,4\}$ and the following transition matrix. \[
\begin{matrix}
 &\mathbf{0} & \mathbf{1}  & \mathbf{4} \\
\mathbf{0}& 1 & 0 & 0 \\
\mathbf{1}& 0.1 & 0.8 & 0.1\\
\mathbf{4}& 0 & 0 & 1  \\
  \end{matrix}
\]

Since $\{0, 4\}$ are absorbing states, we have $\eta(0)=\eta(4)=1$ for all $\eta$. Let $a=\eta(1)$ and we  obtain function $h(a)=\E_1[Y_{\eta}]-c\V_x[Y_{\eta}]$ as shown in Appendix \ref{A4}. $h(a)$ is decreasing on the interval $[0,1]$ and $h(0)=1, h(1)=0.7101<1$. By definition, there are two equilibrium liquidation strategies in total $\eta(1)=0$ and $\eta(1)=1$. By Corollary \ref{cor2}, $K_l(1,{\eta} )=1$ for both equilibrium liquidation strategies, so they are both optimal.
\end{proof}

Since the existence of  optimal equilibrium liquidation strategy is not guaranteed, we naturally turn to the concept of Pareto optimality.

\begin{defn} $\eta^*\in \mathcal{E}$ is called a Pareto optimal equilibrium liquidation strategy if there is no $\eta\in\mathcal{E}$ such that 

(i) $\forall\,  x\in \mathbb{X}, K_l(x, \eta)\ge K_l(x, \eta^*)$;

(ii) $\exists\,  x\in \mathbb{X}, K_l(x, \eta)> K_l(x, \eta^*)$
\end{defn}

\begin{remark} In the second example in proof of Proposition \ref{examples}, the optimal equilibrium  liquidation strategy does not exist, but $(0,1)$ and $(a_3,0)$ are both Pareto optimal equilibrium liquidation strategies.
\end{remark}  

\begin{prop} A Pareto optimal equilibrium liquidation strategy always exists.
\end{prop}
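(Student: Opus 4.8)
The plan is to obtain a Pareto optimal equilibrium by \emph{scalarization}: maximize the total welfare $F(\eta):=\sum_{x\in\mathbb{X}}K_l(x,\eta)$ over the set $\mathcal{E}$ of equilibrium liquidation strategies, and then argue that any maximizer must be Pareto optimal. Since $\mathbb{X}$ is finite, $\mathcal{L}$ may be identified with the cube $[0,1]^{|\mathbb{X}|}$, which is compact. The two ingredients needed are that $\mathcal{E}$ is a nonempty, closed (hence compact) subset of $\mathcal{L}$, and that $\eta\mapsto K_l(x,\eta)$ is continuous on $\mathcal{L}$ for each $x$.

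First I would note that $\mathcal{E}\neq\emptyset$ by Theorem \ref{thm1}. For continuity, recall that $K_l(x,\eta)=K_l(x,\eta\otimes\eta)$ because $\theta^{\eta\otimes\eta}=\theta^\eta$; applying Lemma \ref{lemma2} with $\xi_n=\eta_n\to\eta$ gives $K_l(x,\eta_n)\to K_l(x,\eta)$, so $\eta\mapsto K_l(x,\eta)$ is continuous, and more generally $(\xi,\eta)\mapsto K_l(x,\xi\otimes\eta)$ is jointly continuous. To see that $\mathcal{E}$ is closed, take $\eta_n\in\mathcal{E}$ with $\eta_n\to\eta$. By Definition \ref{eql}, for each fixed $\xi\in\mathcal{L}$ and $x\in\mathbb{X}$ we have $K_l(x,\xi\otimes\eta_n)\le K_l(x,\eta_n\otimes\eta_n)$; letting $n\to\infty$ and using the joint continuity just established yields $K_l(x,\xi\otimes\eta)\le K_l(x,\eta\otimes\eta)$ for all $\xi$ and $x$, i.e.\ $\eta\in\mathcal{E}$. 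Thus $\mathcal{E}$ is a closed subset of the compact cube $\mathcal{L}$ and is therefore compact.

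The continuous function $F$ then attains its maximum on the nonempty compact set $\mathcal{E}$ at some $\eta^*\in\mathcal{E}$. If $\eta^*$ were not Pareto optimal, there would exist $\eta\in\mathcal{E}$ with $K_l(x,\eta)\ge K_l(x,\eta^*)$ for all $x\in\mathbb{X}$ and $K_l(x_0,\eta)>K_l(x_0,\eta^*)$ for some $x_0$; summing over $x$ gives $F(\eta)>F(\eta^*)$, contradicting the maximality of $\eta^*$. Hence $\eta^*$ is a Pareto optimal equilibrium liquidation strategy.

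The main obstacle is establishing the closedness (hence compactness) of $\mathcal{E}$, which rests entirely on the continuity of the objective supplied by Lemma \ref{lemma2}; once that is in hand, the scalarization argument is routine and the finiteness of $\mathbb{X}$ makes the welfare sum well defined and continuous.
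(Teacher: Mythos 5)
Your proof is correct and takes essentially the same route as the paper: both scalarize via the welfare sum $\sum_{x\in\mathbb{X}}K_l(x,\eta)$, use the continuity supplied by Lemma \ref{lemma2} to show that limits of equilibria remain equilibria, and conclude that any maximizer is Pareto optimal by the summing contradiction. The only cosmetic difference is that you establish closedness and compactness of $\mathcal{E}$ and then invoke Weierstrass, whereas the paper extracts a convergent subsequence of $\tfrac{1}{n}$-optimizers directly; these are the same argument in different packaging.
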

\begin{proof} Consider the following optimization problem\begin{equation}
\sup_{\eta\in \mathcal{E}} \sum_{x\in \mathbb{X}} K_l(x, \eta). \label{sup}
\end{equation}

Then any maximizer $\eta^*$ of this problem is a Pareto optimal equilibrium liquidation strategy. Otherwise, there exists $\eta'\in \mathcal{E}$ such that \[
K_l(x, \eta')\ge K_l(x, \eta^*), \forall x\in \mathbb{X};\,\, K_l(x_0, \eta')> K_l(x_0, \eta^*), \exists x_0\in \mathbb{X}; 
\] 
then $ \sum_{x\in \mathbb{X}} K_l(x, \eta')> \sum_{x\in \mathbb{X}} K_l(x, \eta^*),$ which contradicts that $\eta^*$ is the maximizer of the problem. 

Next we will show such maximizer exists and is in $\mathcal{E}$. Let $\eta^n$ be the $\frac{1}{n}$-optimizer of (\ref{sup}). Then there exists $\eta^*\in \mathcal{L}$ such that up to a subsequence $\eta^n\to \eta^*$.
Since $\{\eta^n\}_{n\in \mathbb{N}}$ are all equilibrium liquidation strategies, $K_l(x, \xi\otimes \eta^n)\le K_l(x, \eta^n)$ holds for all $x\in \mathbb{X}$ and all $\xi\in \cL$. By the continuity of the mappings $\eta\to K_l(x, \xi\otimes \eta)$ and  $\eta\to K_l(x, \eta)$,   $K_l(x, \xi\otimes \eta^*)\le K_l(x, \eta)$ also holds for all $x\in \mathbb{X}$ and all $\xi\in \cL$, i.e., $\eta^*$ is also an equilibrium liquidation strategy. Again by the continuity of mapping $\eta\to K_l(x, {\eta})$, $\eta^*$ is the maximizer of (\ref{sup}).
\end{proof}

\section{Mean-Variance Problem}\label{sec_MV}

As what we have done in mean-standard deviation problem, we will analyze different types of subgame perfect Nash equilibrium in mean-variance problems. More specifically, we define equilibrium stopping time, equilibrium randomized stopping strategy and equilibrium liquidation strategy for mean-variance problems as the following. 

\begin{defn} A pure Markov stopping time $\tau$ with stopping region $S$ is said to be an equilibrium stopping time for the mean-variance problem if
\begin{equation*}
x\ge J_p(x, \rho(x, S)),\,\, \forall x\in S \quad \quad {\rm and}\quad \quad x\leq J_p(x, \rho(x, S)),\,\, \forall x\notin S, 
\end{equation*}
where $\rho(x, S):=\inf\{n\ge 1: X^x_n\in S\}$ and $J_p(x, \rho(x, S)):=\E_x[X_{\rho(x, S)}]-c\V_x[X_{\rho(x, S)}]$. 
\end{defn}

\begin{defn}  A randomized stopping time $\mathbf{p}\in \mathcal{P}$ is said to be an equilibrium randomized stopping time for  the mean-variance problem, if for any mapping $\mathbf{q}: \mathbb{X}\to [0,1]$,\[
J_r(x, \mathbf{q}\otimes \mathbf{p})\le J_r(x,\p\otimes\mathbf{p}), \,\,\, \forall x\in \mathbb{X},
\]
where $J_r(x,\q\otimes\mathbf{p}):=\E_x[X_{\q\otimes\mathbf{p}}]-c\V_x[X_{\q\otimes\mathbf{p}}]$. 
\end{defn}

\begin{defn} \label{def3}A liquidation strategy $\theta=\theta^{\eta}\in \mathcal{L}$ is said to be an equilibrium liquidation strategy for the mean-variance problem
if for any mapping $\xi: \mathbb{X}\to [0,1]$, we have\begin{equation*}
J_l(x, {\xi\otimes\eta})\le J_l(x, \eta\otimes{\eta}),\,\,\, \forall x\in \mathbb{X},
\end{equation*}
where $J_l(x, \xi\otimes\eta):=\E_x[\theta^{\xi\otimes\eta}(X)]-c\V_x[\theta^{\xi\otimes\eta}(X)]$. 
\end{defn}

\begin{prop} An equilibrium stopping time for mean-variance problem may not exist.
\end{prop}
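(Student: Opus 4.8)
The plan is to prove the statement by exhibiting an explicit counterexample, in direct parallel to Proposition~\ref{eg013610}. A natural candidate is to reuse the very same five-state chain on $\mathbb{X}=\{0,1,3,6,10\}$ with absorbing states $\{0,3,10\}$ and transient states $\{1,6\}$, but to recalibrate the constant $c$. The recalibration is essential: the mean-variance objective $J_p(x,\tau)=\E_x[X_\tau]-c\V_x[X_\tau]$ penalizes the variance rather than the standard deviation, so the dispersion term enters on a different scale and the value of $c$ that ruled out every stopping region for $K_p$ need not do so for $J_p$.

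By the Remark following Definition~\ref{def_psp}, every stopping region may be assumed to contain the absorbing states $\{0,3,10\}$, so there are exactly four candidate regions, indexed by whether each of the two transient states $1,6$ lies in $S$. For each candidate I would compute the hitting distribution of $X_{\rho(x,S)}$ at the relevant transient state(s), read off $\E_x[X_{\rho(x,S)}]$ and $\E_x[X_{\rho(x,S)}^2]$, hence $\V_x[X_{\rho(x,S)}]$, and test the two equilibrium inequalities, namely $x\ge J_p(x,\rho(x,S))$ for $x\in S$ and $x\le J_p(x,\rho(x,S))$ for $x\notin S$. The hitting distributions are exactly those already tabulated in the proof of Proposition~\ref{eg013610}; only the way $J_p$ is assembled from $\E_x[X_\rho]$, $\E_x[X_\rho^2]$ and $c$ changes.

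The structure I expect to find is as follows. For the two \emph{mixed} regions $S=\{0,3,6,10\}$ (i.e.\ $6\in S,\ 1\notin S$) and $S=\{0,1,3,10\}$ (i.e.\ $1\in S,\ 6\notin S$), the two required inequalities pull $c$ in opposite directions and turn out to be jointly unsatisfiable for \emph{every} $c>0$, so these regions are never equilibria. For the two \emph{extreme} regions $S=\{0,1,3,6,10\}$ and $S=\{0,3,10\}$, each failure imposes a one-sided bound on $c$: the full region fails precisely when $c$ lies below some threshold $c_1$, while the absorbing-only region fails precisely when $c$ lies above some threshold $c_2$. I would then verify that $c_2<c_1$, so that any $c$ in the nonempty window $(c_2,c_1)$ makes all four regions fail at once; choosing such a $c$ (a value near $0.27$ works for this chain) completes the counterexample.

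The main obstacle is purely computational calibration: ensuring that the admissible window $(c_2,c_1)$ for the two extreme regions is nonempty, \emph{and} confirming that the two mixed regions are excluded for the chosen $c$ (here automatically, since they are excluded for all $c$). Should the window for this particular chain turn out to be empty, the remedy would be to perturb the transition probabilities or the state values---a two-transient-state chain leaves enough freedom to separate the two thresholds---but the computation above indicates this is unnecessary.
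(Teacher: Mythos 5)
Your proposal is correct and takes essentially the same approach as the paper: exhibit a concrete Markov chain and rule out all four candidate stopping regions by checking the equilibrium inequalities for $J_p$. The only difference is the choice of example---the paper builds a fresh four-state chain on $\{0,1,2,3\}$ with $c=21/50$ rather than recycling the chain of Proposition~\ref{eg013610}---and your calibration does check out: the two mixed regions fail for every $c>0$ (they require $c\geq 2.88/7.3056$ together with $c\leq 3.4/11.44$, resp.\ $c\geq 3.84/16.2144$ together with $c\leq 2.2/12.96$), the thresholds for the extreme regions are $c_1=3.4/11.44\approx 0.297$ and $c_2=4/16.25\approx 0.246$, so any $c$ in $(0.246,0.297)$, in particular $c=0.27$, defeats all four regions at once.
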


\begin{proof}
We will prove this by giving a counterexample.
Let $c=\frac{21}{50}$ and $X$ has the following transition matrix.
$$
\begin{array}{ccccc}
&{\bf0}&{\bf1}&{\bf2}&{\bf3}\\
{\bf0}&1&0&0&0\\
{\bf1}&\frac{1}{3}&0&\frac{1}{3}&\frac{1}{3}\\
{\bf2}&0&\frac{1}{3}&0&\frac{2}{3}\\
{\bf3}&0&0&0&1
\end{array}
$$
Suppose there exists an equilibrium stopping time with stopping region $S\subset\{0,1,2,3\}$. Denote $H(\cdot,S)=J_p(\cdot,\rho(\cdot,S))$. We consider the following four cases.

\textbf{Case 1: $1,2\in S$.} We have
$$\PP_1(X_{\rho(1,S)}=0)=\PP_1(X_{\rho(1,S)}=2)=\PP_1(X_{\rho(1,S)}=3)=\frac{1}{3},$$
and
$$\E_1\left[X_{\rho(1,S)}\right]=\frac{5}{3},\quad \E_1\left[X_{\rho(1,S)}^2\right]=\frac{13}{3},\quad\text{and thus}\quad H(1,S)=\frac{76}{75}>1,$$
that yields a contradiction.

\textbf{Case 2: $1\notin S$ and $2\in S$.} We have
$$\PP_2(X_{\rho(2,S)}=0)=\PP_2(X_{\rho(2,S)}=2)=\frac{1}{9},\quad\PP_2(X_{\rho(2,S)}=3)=\frac{7}{9},$$
and
$$\E_2\left[X_{\rho(2,S)}\right]=\frac{23}{9},\quad \E_2\left[X_{\rho(2,S)}^2\right]=\frac{67}{9},\quad\text{and thus}\quad H(2,S)=\frac{1466}{675}>2,$$
that yields a contradiction.

\textbf{Case 3: $1\in S$ and $2\notin S$.} We have that
$$\PP_2(X_{\rho(2,S)}=1)=\frac{1}{3},\quad\PP_2(X_{\rho(2,S)}=3)=\frac{2}{3},$$
and
$$\E_2\left[X_{\rho(2,S)}\right]=\frac{7}{3},\quad \E_2\left[X_{\rho(2,S)}^2\right]=\frac{19}{3},\quad\text{and thus}\quad H(2,S)=\frac{147}{75}<2,$$
that yields a contradiction.

\textbf{Case 4: $1,2\notin S$.} We have that
$$\PP_1(X_{\rho(1,S)}=0)=\frac{3}{8},\quad\PP_1(X_{\rho(1,S)}=3)=\frac{5}{8},$$
and
$$\E_1\left[X_{\rho(1,S)}\right]=\frac{15}{8},\quad \E_1\left[X_{\rho(1,S)}^2\right]=\frac{45}{8},\quad\text{and thus}\quad H(1,S)=\frac{633}{640}<1,$$
that yields a contradiction.
\end{proof}

Besides, using a proof similar to the case of mean-standard deviation problem, we can show that an equilibrium randomized stopping time for mean-variance problem exists if and only if it is an equilibrium stopping time.

Now we will focus on the equilibrium liquidation strategy for mean-variance problem as defined in Definition \ref{def3}, although it is not a proper definition as pointed out in Remark \ref{rmk3}. Following an argument similar to the one in the proof of Theorem \ref{thm1}, we can prove that there exists an equilibrium liquidation strategy in mean-variance problem. 
The next proposition shows that in contrast to mean-standard deviation problem, the equilibrium liquidation strategy is not a generalization of equilibrium stopping time in mean-variance problem, although equilibrium liquidation strategies can be thought of as a relaxation of equilibrium stopping times.
\begin{prop} 
An equilibrium stopping time may not be an equilibrium liquidation strategy.
\end{prop}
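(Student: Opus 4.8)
The plan is to prove this by counterexample: to construct a Markov chain and constant $c$ that admit an equilibrium stopping time with region $S$, but for which the induced liquidation strategy $\eta=\1_S$ violates Definition \ref{def3}. The whole argument rests on a single structural difference between the two criteria, which I would isolate first.

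First I would record the first-order analysis for the mean-variance criterion. Using the decomposition from the proof of Proposition \ref{lemma3}, namely $\theta^{\xi\otimes\eta}(X)=\xi(X_0)X_0+(1-\xi(X_0))Y_\eta$, I get for fixed $\eta$ and starting state $x$ that
$$J_l(x,\xi\otimes\eta)=\xi(x)\,x+(1-\xi(x))\E_x[Y_\eta]-c(1-\xi(x))^2\V_x[Y_\eta].$$
In sharp contrast to the mean-standard deviation case, where \eqref{ee8} is \emph{linear} in $\xi(x)$, this is a strictly concave quadratic in $\xi(x)$ whenever $\V_x[Y_\eta]>0$, with derivative $x-\E_x[Y_\eta]+2c(1-\xi(x))\V_x[Y_\eta]$. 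The crucial point is that evaluating this derivative at $\xi(x)=1$ gives simply $x-\E_x[Y_\eta]$, because the variance term contributes nothing at full liquidation.

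The second step is to exploit the gap this creates. For $\eta=\1_S$ one has $Y_\eta=X_{\rho(x,S)}$, and the equilibrium stopping property at $x\in S$ only guarantees $x\ge \E_x[X_{\rho(x,S)}]-c\V_x[X_{\rho(x,S)}]$, i.e. $\E_x[Y_\eta]-x\le c\V_x[Y_\eta]$. For $\eta=\1_S$ to be an equilibrium liquidation strategy, however, full liquidation $\xi(x)=1$ must maximize the concave quadratic above, and by the sign of the derivative at $\xi(x)=1$ this requires the strictly stronger inequality $\E_x[Y_\eta]\le x$. Hence any equilibrium stopping region containing a transient state $x$ with $0<\E_x[Y_\eta]-x\le c\V_x[Y_\eta]$ gives a counterexample: the interior point $\xi^*(x)=1-(\E_x[Y_\eta]-x)/(2c\V_x[Y_\eta])\in(0,1)$ strictly outperforms $\xi(x)=1$. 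Concretely I would take $\mathbb{X}=\{0,1,2\}$ with $0,2$ absorbing and, from state $1$, transitions to $0$ with probability $1/3$ and to $2$ with probability $2/3$, and set $c=1/2$. With $S=\{0,1,2\}$ one checks $Y_\eta=X_1$, so $\E_1[Y_\eta]=4/3$ and $\V_1[Y_\eta]=8/9$, whence $\E_1[Y_\eta]-c\V_1[Y_\eta]=8/9\le 1$ and $S$ is an equilibrium stopping region. But $\E_1[Y_\eta]=4/3>1$, so $\xi^*(1)=5/8$ and $J_l(1,\xi^*\otimes\eta)=17/16>1=J_l(1,\eta\otimes\eta)$, so $\eta=\1_S$ is not an equilibrium liquidation strategy.

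The main obstacle is merely the simultaneous calibration of the two inequalities: the transient state must stay inside the stopping region (forcing $c$ large enough that $\E_x[Y_\eta]-x\le c\V_x[Y_\eta]$) while its continuation mean strictly exceeds its own value ($\E_x[Y_\eta]>x$, which also forces $\V_x[Y_\eta]>0$). Once a transient state whose continuation mean lies strictly above its current value is in place, every sufficiently large $c$ works, so the calibration is routine and the verification reduces to the elementary arithmetic above.
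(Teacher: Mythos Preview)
Your proof is correct and the structural analysis is clean: you correctly identify that $J_l(x,\xi\otimes\eta)$ is a strictly concave quadratic in $\xi(x)$ (rather than linear as in the mean--standard deviation case), so the first-order condition at a boundary point gives a strictly stronger requirement than the equilibrium stopping time condition. Your explicit criterion $0<\E_x[Y_\eta]-x\le c\V_x[Y_\eta]$ and the formula for the improving interior point $\xi^*(x)$ are both right, and the arithmetic in your three-state example checks out.

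The paper proves the same proposition by counterexample as well, but from the opposite boundary: it reuses the chain of Example~\ref{eg012} (with a self-loop at state $1$) and $c=1/4$, where the unique mean-variance equilibrium stopping region is $S=\{0,2\}$, so $\eta(1)=0$; it then shows that $\xi(1)=1/4$ strictly beats $\xi(1)=0$. In other words, the paper exhibits a state \emph{outside} $S$ where partial liquidation improves on ``continue'', whereas you exhibit a state \emph{inside} $S$ where partial continuation improves on ``stop''. Both exploit the same concavity, and amusingly both end up with $\E_1[Y_\eta]=4/3$, $\V_1[Y_\eta]=8/9$. Your version has the advantage of a simpler chain (no self-loop) and of making the general mechanism explicit---in particular your observation that the derivative at $\xi(x)=1$ is exactly $x-\E_x[Y_\eta]$, so the variance penalty vanishes there, is the conceptual heart of the phenomenon and is not spelled out in the paper's proof.
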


\begin{proof}
Consider the Markov process in Example \ref{eg012} and let $c=0.25$. It can be shown that there is only one equilibrium stopping time with stopping region $S=\{0,2\}$ and $\E_x[X_S]-c\V_x[X_S]=\frac{10}{9}$. Next we will show that the corresponding liquidation strategy $\eta(1)=0$ is not an equilibrium liquidation strategy. Since $\E_1[Y_{\eta}]=\frac{4}{3}$ and $\V_1[Y_{\eta}]=\frac{8}{9}$, we have \[
J_l(1, {\xi\otimes\eta})=\xi(1)+\frac{4}{3}(1-\xi(1))-\frac{2}{9}(1-\xi(1))^2.
\]

It is easy to check that $\max_{\xi}J_l(1, {\xi\otimes\eta})=\frac{9}{8}>\frac{10}{9}$ and the maximum is attained at $\xi(1)=\frac{1}{4}$ instead of $0$. By definition $\eta$ is not an equilibrium liquidation strategy.
\end{proof}

This result illustrates that the equilibrium liquidation strategy for mean-variance problem is not a proper definition as we briefly discuss next.

\begin{remark} \label{rmk3} Definition \ref{def3} seems to be reasonable as an analogy of Definition \ref{eql}. However, since there is no scaling effect in the mean-variance problem, this definition has deviated from the concept of subgame perfect Nash equilibrium. 
For example, at time $t=0$, the objective function is \[
\E_{X_0}[\theta(X)]-c\V_{X_0}[\theta(X)].
\]
If we liquidate $1-\alpha$ proportion of the asset, then the objective function at time $t=1$ would become\[
\E_{X_1}[\alpha\theta(X)]-c\V_{X_1}[\alpha\theta(X)]=\alpha(\E_{X_1}[\theta(X)]-c\alpha \V_{X_1}[\theta(X)]),
\]
i.e., the preference of player ``$X_1$'' becomes  $\E_{X_1}[\theta(X)]-c\alpha \V_{X_1}[\theta(X)]$ instead of $\E_{X_1}[\theta(X)]-c \V_{X_1}[\theta(X)]$. Generally, the proportion of asset remaining, $\alpha$, is decreasing as time goes on, therefore we are faced with different problems with different parameter $c$  at different time, even if the initial state remains the same. Definitions of equilibrium liquidation strategies in Definition \ref{def3} and Definition \ref{eql} only make sense when the objective function remains the same for the same initial state $x$.

A possible improved definition is incorporating the remaining component, i.e., to enlarge the strategy set such that it depends on the state as well as the remaining component of the asset. However, this expansion makes the set of players an uncountable set (instead of identifying the players with the states of the Markov chain, we will need to use an additional variable which is not discrete). This is an intergenerational problem with exhaustible resources. Such a problem is beyond the scope of this paper and will be left for future research. But we should emphasize that one of the main messages of our paper is that mean-standard deviation problem is more appropriate and for this criterion such an extension of the state space is not necessary. 

\end{remark}


\section{Comparison with Static Optimal Stopping Time}\label{sec_compare}

In this section, we want to compare the pre-commitment strategy with the equilibrium liquidation strategy. It is obvious that  given any current state $x$, the static optimality is no less than the value of $K_p(x, \tau)$ where $\tau$ is an equilibrium stopping time. However this may not be the case when we compare static optimal stopping times with equilibrium liquidation strategies. As we have discussed in Example \ref{eg0123}, a liquidation strategy may produce larger value than the static optimal stopping time does. However in Example \ref{eg0123}, the liquidation strategy is not an equilibrium. The following examples show that an equilibrium liquidation strategy may produce larger value than the static optimal stopping time does in both mean-standard deviation problem and mean-variance problem. The intuitive reason is that a liquidation strategy allows for selling parts of an asset over time, while the static optimal stopping time problem relies on the assumption that the whole asset must be sold at exactly one point in time. 
 
\begin{Example} \label{eg4.1}
Consider the first example in the proof of Proposition \ref{examples}. 

For mean-standard deviation problem,
\[\sup_{\tau\in \mathcal{T}}K_p(1, \tau)=K_p(1, \tau')=2.6940, \,\,\,\, \sup_{\tau\in \mathcal{T}}K_p(7, \tau)=K_p(7, \tau')=7.0187.\]
where $\tau'=\inf\{n\ge 0: X^1_n\in\{0,2,9\}\}$.

The optimal  equilibrium liquidation strategy $\eta$ is the same as $\tau'$. We have \[
K_l(1,{\eta})=\sup_{\tau\in \mathcal{T}}K_p(1, \tau), \,\,\,\,  K_l(7, {\eta})=\sup_{\tau\in \mathcal{T}}K_p(7, \tau).
\]

For mean-variance problem,
\[\sup_{\tau}J_p(1, \tau)=J_p(1, \tau'')=1,\,\, \sup_{\tau}J_p(7, \tau)=J_p(7, \tau'')=7. \]
where $\tau''=0$.

The unique equilibrium liquidation strategy is $\eta(1)=a'\in (0,1), \eta(7)=b'\in (0,1)$ where $a'\approx 0.6778$ and $b'\approx 0.9089$. Details on finding the equilibrium liquidation strategy can be found in Appendix \ref{A5}.  We have \[
J_l(1,{\eta})=2.6438>\sup_{\tau\in \mathcal{T}}J_p(1, \tau), \,\,\,\, J_l(7, {\eta})=6.4521<\sup_{\tau\in \mathcal{T}}J_p(7, \tau).
\]
\end{Example}

\begin{Example} \label{eg4.2}
Consider the second example in the proof of Proposition \ref{examples}. 

For mean-standard deviation problem,\[\sup_{\tau\in \mathcal{T}}K_p(11, \tau)=K_p(11, \tau')=K_p(11, \tau'')=11.1515, \]
where $\tau'=\inf\{n\ge 0: X^1_n\in\{0,18\}\}, \tau''=\inf\{n\ge 0: X^1_n\in\{0,17,18\}\}$, and
\[\sup_{\tau\in \mathcal{T}}K_p(17, \tau)=K_p(17, \tau''')=17.0022, \]
where $\tau'''=\inf\{n\ge 0: X^1_n\in\{0,11,18\}\}$.

There is an equilibrium liquidation strategy $\eta(11)=a_3\in (0,1), \eta(17)=0$. We have \[
K_l(11, {\eta})=11<\sup_{\tau\in \mathcal{T}}K_p(11, \tau), \,\,\,\, K_l(17, {\eta})=17.0212>\sup_{\tau\in \mathcal{T}}K_p(17, \tau).
\]

For mean-variance problem,
\[\sup_{\tau\in \mathcal{T}}J_p(11, \tau)=J_p(11, \tau')=11, \,\,\,\, \sup_{\tau\in \mathcal{T}}J_p(17, \tau)=J_p(17, \tau')=17, \]
where $\tau'=0$.

The unique  equilibrium liquidation strategy is $\eta(11)=a'\in (0,1), \eta(17)=b'\in (0,1)$ where $a'\approx 0.9312$ and $b'\approx 0.7629$. Details on finding the equilibrium liquidation strategy can be found in Appendix \ref{A5}. We have \[
J_l(11, {\eta})=10.8365<\sup_{\tau\in \mathcal{T}}J_p(11, \tau), \,\,\,\, J_l(17, {\eta})=16.9981<\sup_{\tau\in \mathcal{T}}J_p(17, \tau).
\]
\end{Example}

\appendix
\section{Computation details}  \label{A}
\subsection{Example \ref{eg013610'}} \label{A1}

We first analyze all the possible trajectories of Markov chain $X$ when starting from $1$ and $6$.

\textbf{Case 1:}
$X:1\to 6\to 1\to 6 \to 1\to\cdots \to 6 \to 1\to 0$.

$\Rightarrow Y_{\eta}=0$ for $k=0$ and $Y_{\eta}=a\sum_{i=1}^{k}(1-a)^{i-1}(1-b)^{i}+6b\sum_{i=0}^{k-1}(1-a)^i(1-b)^i$ for $k\ge 1$. Then $Y_{\eta}=c(1-(1-a)^k(1-b)^k)$ with probability $0.2^{2k+1}$  for $k\ge 0$ where $c=\frac{a+6b-ab}{1-(1-a)(1-b)}$.

\textbf{Case 2:}
$X:1\to 6\to 1\to 6 \to 1\to\cdots \to 6 \to 1\to 3$.

$\Rightarrow Y_{\eta}=3$ for $k=0$ and $Y_{\eta}=a\sum_{i=1}^{k}(1-a)^{i-1}(1-b)^{i}+6b\sum_{i=0}^{k-1}(1-a)^i(1-b)^i+3(1-a)^k(1-b)^k$ for $k\ge 1$. Then $Y_{\eta}=c(1-(1-a)^k(1-b)^k)+3(1-a)^k(1-b)^k$  with probability $2\times0.2^{2k+1}$ for $k\ge 0$.

\textbf{Case 3:}
$X:1\to 6\to 1\to 6 \to 1\to\cdots \to 6 \to 1\to 10$.

$\Rightarrow Y_{\eta}=10$ for $k=0$ and $Y_{\eta}=a\sum_{i=1}^{k}(1-a)^{i-1}(1-b)^{i}+6b\sum_{i=0}^{k-1}(1-a)^i(1-b)^i+10(1-a)^k(1-b)^k$ for $k\ge 1$. Then $Y_{\eta}=c(1-(1-a)^k(1-b)^k)+10(1-a)^k(1-b)^k$  with probability $0.2^{2k+1}$ for $k\ge 0$.

\textbf{Case 4:}
$X:1\to 6\to 1\to 6 \to 1\to\cdots \to 6 \to 10$.

$\Rightarrow  Y_{\eta}=a\sum_{i=1}^{k}(1-a)^{i-1}(1-b)^{i}+6b\sum_{i=0}^{k}(1-a)^i(1-b)^i+10(1-a)^k(1-b)^{k+1}$ for $k\ge 0$. Then $Y_{\eta}=c(1-(1-a)^k(1-b)^k)+(10-4b)(1-a)^k(1-b)^k$  with probability $0.8\times0.2^{2k+1}$ for $k\ge 0$.

\textbf{Case 5:}
$X:6\to 1\to 6 \to 1\to\cdots \to 6 \to 1\to 0$ for $k\ge 0$.

$\Rightarrow Y_{\eta}=a$ for $k=0$ and $Y_{\eta}=a\sum_{i=0}^{k}(1-a)^{i}(1-b)^{i}+6b\sum_{i=0}^{k-1}(1-a)^{i+1}(1-b)^i$ for $k\ge 1$. Then $Y_{\eta}=d(1-(1-a)^k(1-b)^k)+a (1-a)^k(1-b)^k$ with probability $0.2^{2k+2}$ for $k\ge 0$ where $d=\frac{a+6b-6ab}{1-(1-a)(1-b)}$.

\textbf{Case 6:}
$X:6\to 1\to 6 \to 1\to\cdots \to 6 \to 1\to 3$.

$\Rightarrow Y_{\eta}=a+3(1-a)$ for $k=0$ and $Y_{\eta}=a\sum_{i=0}^{k}(1-a)^{i}(1-b)^{i}+6b\sum_{i=0}^{k-1}(1-a)^{i+1}(1-b)^i+3(1-a)^{k+1}(1-b)^k$ for $k\ge 1$. Then $Y_{\eta}=d(1-(1-a)^k(1-b)^k)+(3-2a)(1-a)^k(1-b)^k$  with probability $2\times0.2^{2k+2}$ for $k\ge 0$.

\textbf{Case 7:}
$X:6\to 1\to 6 \to 1\to\cdots \to 6 \to 1\to 10$.

$\Rightarrow Y_{\eta}=a+10(1-a)$ for $k=0$ and $Y_{\eta}=a\sum_{i=0}^{k}(1-a)^{i}(1-b)^{i}+6b\sum_{i=0}^{k-1}(1-a)^{i+1}(1-b)^i+10(1-a)^{k+1}(1-b)^k$ for $k\ge 1$. Then $Y_{\eta}=d(1-(1-a)^k(1-b)^k)+(10-9a)(1-a)^k(1-b)^k$  with probability $0.2^{2k+2}$ for $k\ge 0$.

\textbf{Case 8:}
$X:6\to 1\to 6 \to 1\to\cdots \to 6 \to 10$.

$\Rightarrow Y_{\eta}=10$ for $k=0$ and $Y_{\eta}=a\sum_{i=0}^{k-1}(1-a)^{i}(1-b)^{i}+6b\sum_{i=0}^{k-1}(1-a)^{i+1}(1-b)^i+10(1-a)^{k}(1-b)^k$ for $k\ge 1$. Then $Y_{\eta}=d(1-(1-a)^k(1-b)^k)+10(1-a)^k(1-b)^k$  with probability $0.8\times0.2^{2k}$ for $k\ge 0$.

From the above, we can conclude that 

(1) When $X_0=1$, \begin{align*}
& \mathbb{P}_1(Y_{\eta}=c-ct^k)=0.2^{2k+1}, k\ge 0,\\
& \mathbb{P}_1(Y_{\eta}=c+(3-c)t^k)=2\times0.2^{2k+1}, k\ge 0,\\
& \mathbb{P}_1(Y_{\eta}=c+(10-c)t^k)=0.2^{2k+1}, k\ge 0,\\
& \mathbb{P}_1(Y_{\eta}=c+(10-4b-c)t^k)=0.8\times0.2^{2k+1}, k\ge 0,
\end{align*}
where $t=(1-a)(1-b)$, and \begin{align*}
&\mathbb{E}_1[Y_{\eta}]=c+\frac{4.8-0.96c-0.64b}{1-0.04t},\\
&\mathbb{E}_1[Y_{\eta}^2]=c^2+\frac{-1.92c^2+9.6c-1.28cb}{1-0.04t}+\frac{0.96c^2-9.6c+39.6-12.8b+1.28bc+2.56b^2}{1-0.04t^2}.
\end{align*}

(2) When $X_0=6$, \begin{align*}
& \mathbb{P}_6(Y_{\eta}=d+(a-d)t^k)=0.2^{2k+2}, k\ge 0,\\
& \mathbb{P}_6(Y_{\eta}=d+(3-2a-d)t^k)=2\times0.2^{2k+2}, k\ge 0,\\
& \mathbb{P}_6(Y_{\eta}=d+(10-9a-d)t^k)=0.2^{2k+2}, k\ge 0,\\
& \mathbb{P}_6(Y_{\eta}=d+(10-d)t^k)=0.8\times0.2^{2k}, k\ge 0,
\end{align*}
where $t=(1-a)(1-b)$, and\begin{align*}
&\mathbb{E}_6[Y_{\eta}]=d+\frac{8.64-0.96d-0.48a}{1-0.04t},\\
&\mathbb{E}_6[Y_{\eta}^2]=d^2+\frac{-1.92d^2+17.28d-0.96ad}{1-0.04t}+\frac{0.96d^2-17.28d+84.72+3.6a^2+0.96ad-8.16a}{1-0.04t^2}.
\end{align*}

Then we will obtain the result in Example \ref{eg013610'}.

\subsection{The first example in Proposition \ref{examples} } \label{A2}
Since the transition matrix in this example is the same as Example \ref{eg013610'}, by following a similar analysis of  $X$'s trajectories, we have

(1) When $X_0=1$,
\begin{align*}
& \mathbb{P}_1(Y_{\eta}=c-ct^k)=0.2^{2k+1}, k\ge 0,\\
& \mathbb{P}_1(Y_{\eta}=c+(2-c)t^k)=2\cdot 0.2^{2k+1}, k\ge 0,\\
& \mathbb{P}_1(Y_{\eta}=c+(9-c)t^k)=0.2^{2k+1}, k\ge 0,\\
& \mathbb{P}_1(Y_{\eta}=c+(9-2b-c)t^k)=0.8\cdot 0.2^{2k+1}, k\ge 0,
\end{align*}
where $c=\frac{a+7b-ab}{1-(1-a)(1-b)}$ and $t=(1-a)(1-b)$. Then we have\begin{align*}
&\mathbb{E}_1[Y_{\eta}]=c+\frac{0.2(20.2-4.8c-1.6b)}{1-0.04t},\\
&\mathbb{E}_1[Y_{\eta}^2]=c^2+\frac{0.4(-4.8c^2+20.2c-1.6cb)}{1-0.04t}+\frac{0.2(c^2+2(2-c)^2+(9-c)^2+0.8(9-2b-c)^2)}{1-0.04t^2}.
\end{align*}

(2) When $X_0=7$,
\begin{align*}
& \mathbb{P}_7(Y_{\eta}=d+(a-d)t^k)=0.2^{2k+2}, k\ge 0,\\
& \mathbb{P}_7(Y_{\eta}=d+(2-a-d)t^k)=2\times0.2^{2k+2}, k\ge 0,\\
& \mathbb{P}_7(Y_{\eta}=d+(9-8a-d)t^k)=0.2^{2k+2}, k\ge 0,\\
& \mathbb{P}_7(Y_{\eta}=d+(9-d)t^k)=0.8\times0.2^{2k}, k\ge 0,
\end{align*}
where $d=\frac{a+7b-7ab}{1-(1-a)(1-b)}$ and $t=(1-a)(1-b)$. Then we have\begin{align*}\
&\mathbb{E}_7[Y_{\eta}]=d+\frac{0.04(193-24d-9a)}{1-0.04t},\\
&\mathbb{E}_7[Y_{\eta}^2]=d^2+\frac{0.08(-24d^2+193d-9ad)}{1-0.04t}+\\
&\quad \quad \quad \quad \frac{0.04((a-d)^2+2(2-a-d)^2+(9-8a-d)^2+20(9-d)^2)}{1-0.04t^2}.
\end{align*}

Furthermore, we can find that for any $a,b\in [0,1]$, $g_1(a,b)>1$, which implies that $a=0$. By plotting the graphs of $g_1(0,b)$ and $g_7(0,b)$ as functions of $b\in [0,1]$, we obtain

\begin{center}
\includegraphics[scale=0.3]{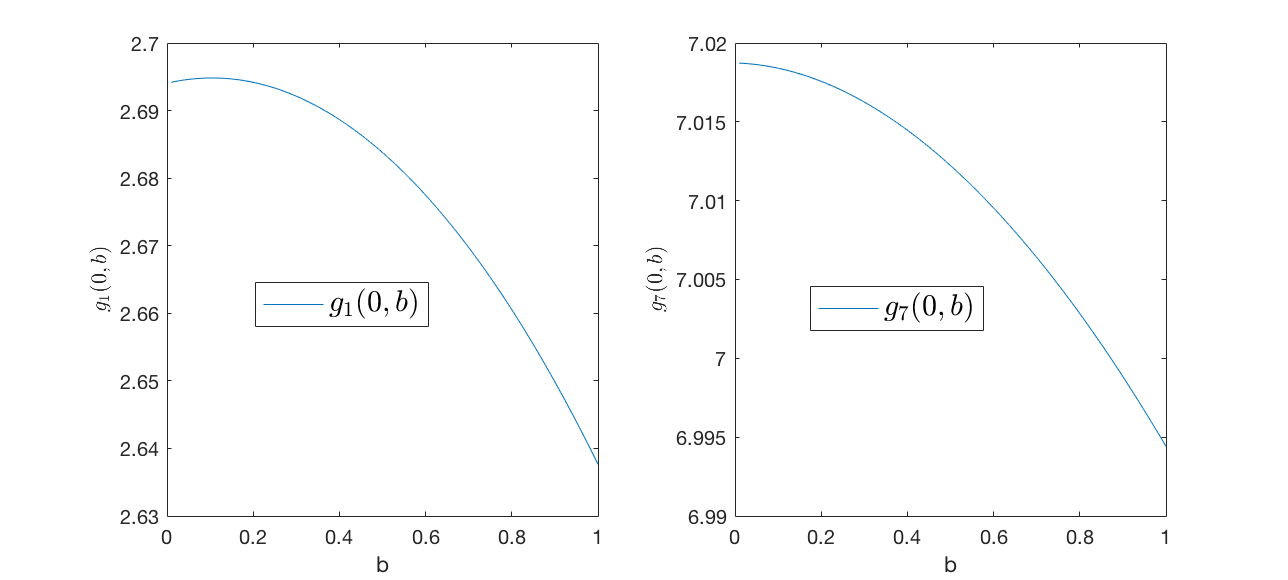}
\end{center}
which follows the discussion in part (i) of the proof of Proposition \ref{examples}.

\subsection{The second example in Proposition \ref{examples}} \label{A3}

It can be computed directly that there are three equilibrium stopping times. They can be written in the form of liquidation strategies: $(a,b)=(1,1), (a,b)=(0,1)$, and $(a,b)=(1,0)$. To check they are  indeed equilibria, compute \begin{align*}
&g_{11}(1,1)=10.8330<11, \,\, g_{17}(1,1)=16.9912<17,\\
&g_{11}(0,1)=11.1515>11, \,\, g_{17}(0,1)=16.9774<17,\\
&g_{11}(1,0)=10.8330<11, \,\, g_{17}(1,0)=17.0022>17,
\end{align*}
where $g_{11}(a,b)=\mathbb{E}_{11}[Y_{\eta}]-0.1(\mathbb{E}_{11}[Y_{\eta}^2]-\mathbb{E}_{11}[Y_{\eta}]^2)^{1/2}$ and $g_{17}(a,b)=\mathbb{E}_{17}[Y_{\eta}]-0.1(\mathbb{E}_{17}[Y_{\eta}^2]-\mathbb{E}_{17}[Y_{\eta}]^2)^{1/2}$. 

Also notice that $(a,b)=(0,0)$ is not a equilibrium liquidation strategy since $g_{17}(0,0)=16.9934<17$. 

To find all equilibrium liquidation strategies, we need to analyze all the possible trajectories of this Markov chain when starting from $11$ and $17$.

\textbf{Case 1:}
$X: 11 \to 11 \to \cdots \to 11 \to 0$. 

Then $Y_\eta=0$ with probability 0.1 and $Y_{\eta}=11a\sum_{i=0}^{k-1}(1-a)^i$ with probability $0.1\cdot 0.7^k$ for $k\ge 1$. 

\textbf{Case 2:}
$X: 11 \to 11 \to \cdots \to 11 \to 18$. 

Then $Y_\eta=18$ with probability 0.2 and $Y_{\eta}=11a\sum_{i=0}^{k-1}(1-a)^i+18(1-a)^k$ with probability $0.2\cdot 0.7^k$ for $k\ge 1$. 

\textbf{Case 3:}
$X: 17 \to 17 \to \cdots \to 17 \to 18$. 

Then $Y_\eta=18$ with probability 0.8 and $Y_{\eta}=17b\sum_{i=0}^{k-1}(1-b)^i+18(1-b)^k$ with probability $0.8\cdot 0.1^k$ for $k\ge 1$. 

\textbf{Case 4:}
$X: 17 \to 17 \to \cdots \to 17 \to 11 \to 11 \to \cdots \to 11 \to 0$. 

Then $Y_\eta=11a\sum_{j=0}^{m}(1-a)^j$ with probability $0.01\cdot 0.7^m$ for $m\ge 0$ and $Y_{\eta}=17b\sum_{i=0}^{k-1}(1-b)^i+(1-b)^k11a\sum_{j=0}^{m}(1-a)^j$ with probability $0.01\cdot 0.7^m\cdot 0.1^k$ for $k\ge 1, m\ge 0$. 

\textbf{Case 5:}
$X: 17 \to 17 \to \cdots \to 17 \to 11 \to 11 \to \cdots \to 11 \to 18$. 

Then $Y_\eta=11a\sum_{j=0}^{m}(1-a)^j+18(1-a)^{m+1}$ with probability $0.02\cdot 0.7^m$ for $m\ge 0$ and $Y_{\eta}=17b\sum_{i=0}^{k-1}(1-b)^i+(1-b)^k(11a\sum_{j=0}^{m}(1-a)^j+18(1-a)^{m+1})$ with probability $0.02\cdot 0.7^m\cdot 0.1^k$ for $k\ge 1, m\ge 0$. 

From the above, we can conclude that 

(1) When $X_0=11$,
\begin{align*}
& \mathbb{P}_{11}(Y_{\eta}=11-11(1-a)^k)=0.1\cdot 0.7^k, \quad k\ge 0,\\
& \mathbb{P}_{11}(Y_{\eta}=11+7(1-a)^k)=0.2\cdot 0.7^{k}, \quad k\ge 0,
\end{align*} 
and \begin{align*}
&\mathbb{E}_{11}[Y_{\eta}]=11+\frac{0.3}{0.3+0.7a},\\
&\mathbb{E}_{11}[Y_{\eta}^2]=11^2+\frac{6.6}{0.3+0.7a}+\frac{21.9}{1-0.7(1-a)^2}.
\end{align*}

(2) When $X_0=17$,
\begin{align*}
& \mathbb{P}_{17}(Y_{\eta}=17+(1-b)^k)=0.8\cdot 0.1^{k},\quad  k\ge 0\\
& \mathbb{P}_{17}(Y_{\eta}=17-(6+11(1-a)^{m+1})(1-b)^k)=0.01\cdot 0.7^m\cdot 0.1^k,\quad  k\ge 0, m\ge 0,\\
& \mathbb{P}_{17}(Y_{\eta}=17-(6-7(1-a)^{m+1})(1-b)^k)=0.02\cdot 0.7^m\cdot 0.1^k,\quad  k\ge 0, m\ge 0,
\end{align*}
and \begin{align*}
&\mathbb{E}_{17}[Y_{\eta}]=17+(0.2+\frac{0.03(1-a)}{0.3+0.7a})\frac{1}{0.9+0.1b},\\
&\mathbb{E}_{17}[Y_{\eta}^2]=17^2+(0.2+\frac{0.03(1-a)}{0.3+0.7a})\frac{34}{0.9+0.1b}+\\
&\quad \quad \quad \quad  (4.4-\frac{0.36(1-a)}{0.3+0.7a}+\frac{2.19(1-a)^2}{1-0.7(1-a)^2})\frac{1}{1-0.1(1-b)^2}.
\end{align*}

The sets $\{(a,b)\in [0,1]\times[0,1] : g_{11}(a,b)=11 \}$ and  $\{(a,b)\in [0,1]\times[0,1] : g_{17}(a,b)=17 \}$ are shown as the following. 

\begin{center}
\includegraphics[scale=0.25]{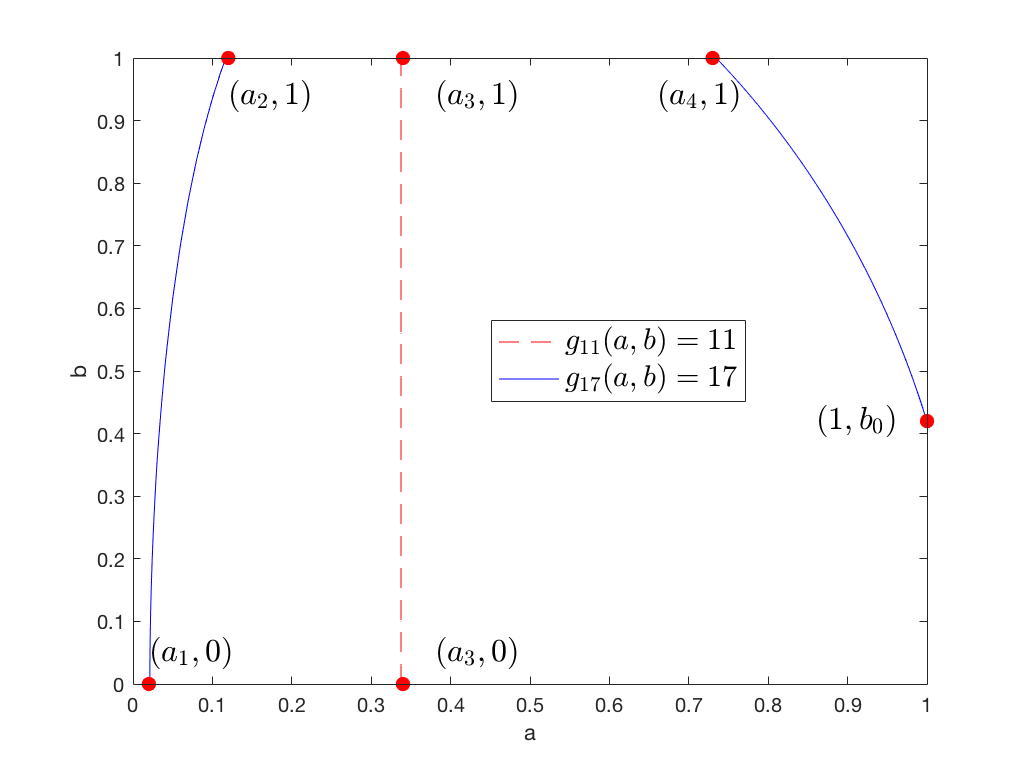}
\end{center}

This graph shows that the curves $g_{11}(a,b)=11$ and $g_{17}(a,b)=17$ do not intersect. So the candidates for equilibrium liquidation strategies only lie on the boundary of $[0,1]\times[0,1]$. From the graph, we can observe that there exist $0<a_1<a_2<a_3<a_4<1$ and $0<b_0<1$ such that \begin{align*}
& g_{17}(a_1,0)=17, \quad g_{17}(a_2, 1)=17, \quad g_{17}(a_4,1) =17;\\
& g_{11}(a_3, 0)=g_{11}(a_3, 1)=11;\\
& g_{17}(1, b_0)=17.
\end{align*}

Also from the graph we know that $g_{11}(a_1,0)\ne 11, g_{11}(a_2, 1)\ne 11$ and $g_{11}(a_4, 1)\ne 11$, so they cannot be equilibrium liquidation strategies. To find out whether $(a_3,0), (a_3,1)$ and $(1, b_0)$ are equilibrium liquidation strategies. We plot the graphs of $g_{11}(a,b_0),  g_{17}(a,0)$ and $g_{17}(a,1)$ as functions of $a\in[0,1]$.

\begin{center}
\includegraphics[scale=0.48]{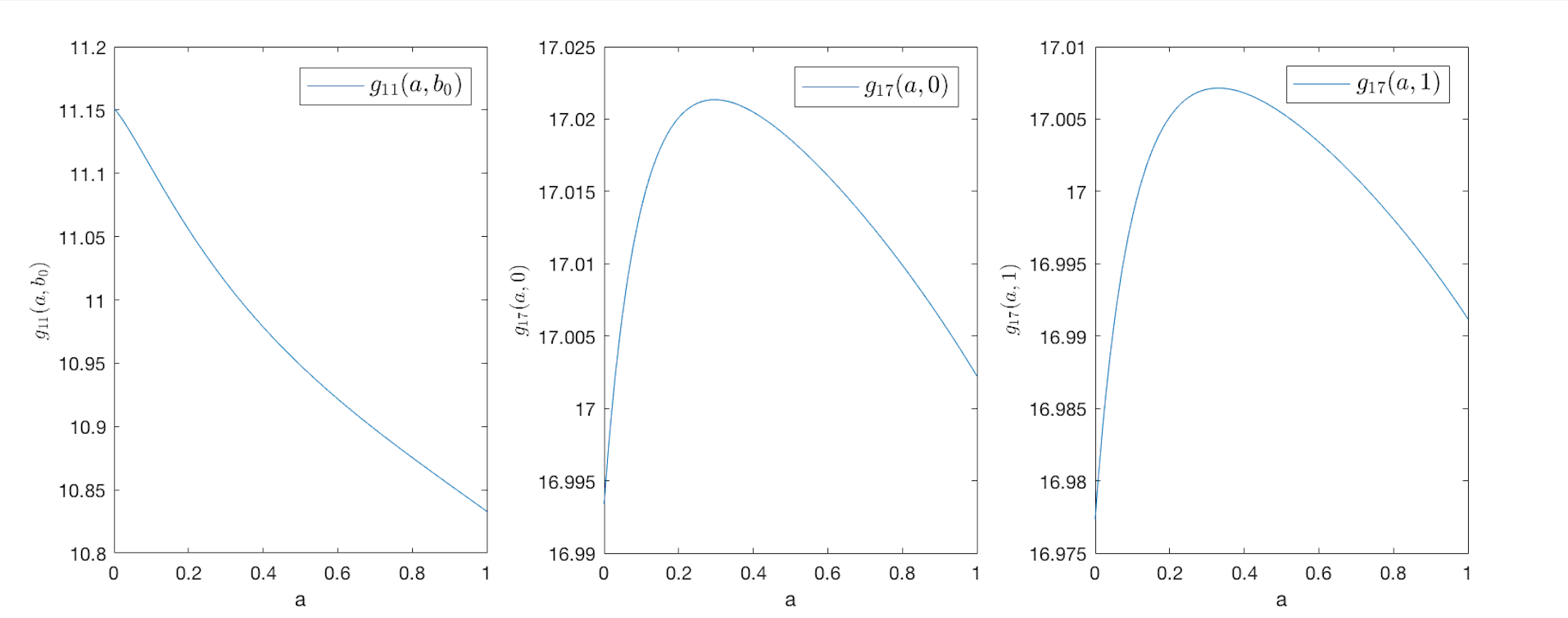}
\end{center}

These graphs show that $g_{11}(1,b_0)<11, g_{17}(a_3, 0)>17$, and $g_{17}(a_3, 1)>17$. So there are five equilibrium liquidation strategies as discussed in part (ii) of the proof of Proposition \ref{examples}. 

\subsection{The third example in Proposition \ref{examples} :} \label{A4}

We first analyze all the possible trajectories of Markov chain $X$ when starting from $1$.

\textbf{Case 1:}
$X:1\to  1\to \cdots \to 1 \to 1\to 0$.

$\Rightarrow Y_{\eta}=0$ for $k=0$ and $Y_{\eta}=a\sum_{i=1}^{k-1}(1-a)^{i}$ for $k\ge 1$. Then $Y_{\eta}=1-(1-a)^k$ with probability $0.1\times0.8^{k}$  for $k\ge 0$.

\textbf{Case 2:}
$X:1\to 1\to \cdots \to 1 \to 1\to 4$.

$\Rightarrow Y_{\eta}=4$ for $k=0$ and $Y_{\eta}=a\sum_{i=1}^{k-1}(1-a)^{i}+4(1-a)^k$ for $k\ge 1$. Then $Y_{\eta}=1+3(1-a)^k$ with probability $0.1\times0.8^{k}$  for $k\ge 0$.

By computation, we have \[
\E_1[Y_{\eta}]=1+\frac{0.2}{0.2+0.8a}, \quad\quad \E_1[Y_{\eta}^2]=1+\frac{0.4}{0.2+0.8a}+\frac{1}{1-0.8(1-a)^2}
\]

Then the explicit expression for $h(a)=\E_1[Y_{\eta}]-c\V_x[Y_{\eta}]$ can be obtained and we have the results in part (iii) of the proof of Proposition \ref{examples}.

\subsection{Equilibrium liquidation strategies for the mean-variance problems in Examples~\ref{eg4.1} and \ref{eg4.2}} \label{A5}

If $\eta$ is an equilibrium liquidation strategy in the mean-variance problem, then \[
J_l(x, {\eta})=\sup_{\xi\in \cL}J_l(x, {\xi\otimes \eta}),\,\,\,\forall x\in \mathbb{X}.
\]
Recall that  $
\E_x[\theta^{\xi\otimes\eta}(X)]=x\xi(x)+(1-\xi(x))\E_x[Y_{\eta}]$ and $\V_x[\theta^{\xi\otimes\eta}(X)]=(1-\xi(x))^2\V_x[Y_{\eta}]$.
Therefore,\[
J_l(x, {\xi\otimes\eta})=-c\V_x[Y_{\eta}]\xi(x)^2+(2c\V_x[Y_{\eta}]-\E_x[Y_{\eta}]+x)\xi(x)+\E_x[Y_{\eta}]-c\V_x[Y_{\eta}],
\]
is a quadratic function of $\xi(x)$ when $\eta$ is fixed. We then have \[
\eta(x)=
\begin{cases}
1, & {\rm if}\,\, h_x(\eta)\in [1,\infty);\\
h_x(\eta), &  {\rm if}\,\, h_x(\eta)\in (0,1);\\
0, & {\rm if}\,\, h_x(\eta)\in (-\infty, 0].\\
\end{cases}
\]
where $h_x(\eta)=\frac{2c\V[Y_{\eta}]-\E_x[Y_{\eta}]+x}{2c\V_x[Y_{\eta}]}$.

In Example \ref{eg4.1}, $\E_i[Y_{\eta}]$ and $\E_i[Y_{\eta}^2]$ for $i=1,7$ have been computed in Appendix \ref{A2}, so we obtain the explicit expressions of $h_1(\eta)$ and $h_7(\eta)$ as functions of $a:=\eta(1)$ and $b=:\eta(7)$. Then we observe that $h_i(a, b)\in (0,1)$, for all $(a,b)\in [0,1]\times[0,1]$, $i=1,7$, and there is exactly one intersection of the curve $\{(a,b): h_1(a, b)=a\}$ and the curve $\{(a,b): h_7(a, b)=7\}$, which is the equilibrium liquidation strategy for mean-variance problem in Example \ref{eg4.1}. Similarly we can find the equilibrium liquidation strategy for mean-variance problem in Example \ref{eg4.2}. The corresponding graphs are shown below.
\begin{center}
\includegraphics[scale=0.28]{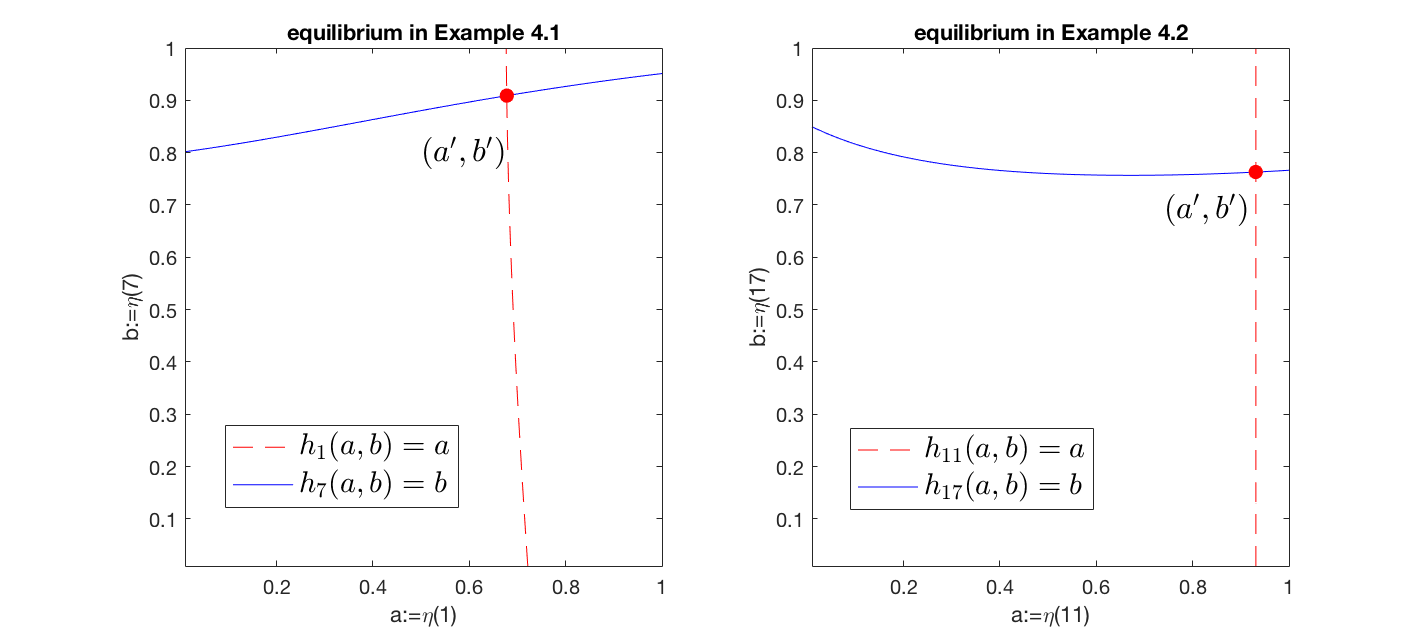}
\end{center}
\bibliographystyle{plain}
\bibliography{reference}{}

\begin{thebibliography}{10}

\bibitem{bayraktar2016}
Erhan Bayraktar and Zhou Zhou.
\newblock Arbitrage, hedging and utility maximization using semi-static trading
  strategies with american options.
\newblock {\em Ann. Appl. Probab.}, 26(6):3531--3558, 12 2016.

\bibitem{doi:10.1287/moor.2018.0932}
Erhan Bayraktar and Zhou Zhou.
\newblock No-arbitrage and hedging with liquid american options.
\newblock {\em Mathematics of Operations Research}, Published online, Aug 2018.

\bibitem{B2014}
Tomas Bj{\"o}rk and Agatha Murgoci.
\newblock A theory of markovian time-inconsistent stochastic control in
  discrete time.
\newblock {\em Finance and Stochastics}, 18(3):545--592, Jul 2014.

\bibitem{doi:10.1137/17M1153029}
S.~Christensen and K.~Lindensjö.
\newblock On finding equilibrium stopping times for time-inconsistent markovian
  problems.
\newblock {\em SIAM Journal on Control and Optimization}, 56(6):4228--4255,
  2018.

\bibitem{2018arXiv180407018C}
S{\"o}ren {Christensen} and Kristoffer {Lindensj{\"o}}.
\newblock {On time-inconsistent stopping problems and mixed strategy stopping
  times}.
\newblock {\em ArXiv:1804.07018v2}, April 2018.

\bibitem{10.1257/aer.20130896}
Sebastian Ebert and Philipp Strack.
\newblock Until the bitter end: On prospect theory in a dynamic context.
\newblock {\em American Economic Review}, 105(4):1618--33, April 2015.

\bibitem{MR668536}
Gerald~A. Edgar, Annie Millet, and Louis Sucheston.
\newblock On compactness and optimality of stopping times.
\newblock In {\em Martingale theory in harmonic analysis and {B}anach spaces
  ({C}leveland, {O}hio, 1981)}, volume 939 of {\em Lecture Notes in Math.},
  pages 36--61. Springer, Berlin-New York, 1982.

\bibitem{MR0047317}
Ky. Fan.
\newblock Fixed-point and minimax theorems in locally convex topological linear
  spaces.
\newblock {\em Proc. Nat. Acad. Sci. U. S. A.}, 38:121--126, 1952.

\bibitem{RePEc:eee:jfinec:v:84:y:2007:i:1:p:2-39}
Steven~R. Grenadier and Neng Wang.
\newblock {Investment under uncertainty and time-inconsistent preferences}.
\newblock {\em Journal of Financial Economics}, 84(1):2--39, April 2007.

\bibitem{doi:10.1137/17M1139187}
Y.~Huang and Z.~Zhou.
\newblock The optimal equilibrium for time-inconsistent stopping problems---the
  discrete-time case.
\newblock {\em SIAM Journal on Control and Optimization}, 57(1):590--609, 2019.

\bibitem{MR3738666}
Yu-Jui Huang and Adrien Nguyen-Huu.
\newblock Time-consistent stopping under decreasing impatience.
\newblock {\em Finance Stoch.}, 22(1):69--95, 2018.

\bibitem{2017arXiv170903535H}
Yu-Jui {Huang}, Adrien {Nguyen-Huu}, and Xun~Yu {Zhou}.
\newblock {General Stopping Behaviors of Naive and Non-Committed Sophisticated
  Agents, with Application to Probability Distortion}.
\newblock {\em Mathematical Finance}, (to appear), 2019.

\bibitem{2017arXiv171207806H}
Yu-Jui {Huang} and Zhou {Zhou}.
\newblock {Optimal Equilibria for Time-Inconsistent Stopping Problems in
  Continuous Time}.
\newblock {\em ArXiv:1712.07806v2}, October 2018.

\bibitem{MASKIN2001191}
Eric Maskin and Jean Tirole.
\newblock Markov perfect equilibrium: I. observable actions.
\newblock {\em Journal of Economic Theory}, 100(2):191 -- 219, 2001.

\bibitem{10.1257/aer.89.1.103}
Ted O'Donoghue and Matthew Rabin.
\newblock Doing it now or later.
\newblock {\em American Economic Review}, 89(1):103--124, March 1999.

\bibitem{MR3462068}
Jesper~L. Pedersen and Goran Peskir.
\newblock Optimal mean-variance selling strategies.
\newblock {\em Math. Financ. Econ.}, 10(2):203--220, 2016.

\bibitem{Pollak}
Robert~A. Pollak.
\newblock Consistent planning.
\newblock {\em The Review of Economic Studies}, 35(2):201--208, 1968.

\bibitem{RePEc:oup:restud:v:23:y:1955:i:3:p:165-180.}
Robert~H. Strotz.
\newblock Myopia and inconsistency in dynamic utility maximization.
\newblock {\em Review of Economic Studies}, 23(3):165--180, 1955.

\end{thebibliography}

\end{document}